\documentclass[10pt]{iopart}

\makeatletter
\expandafter\let\csname equation*\endcsname\relax
\expandafter\let\csname endequation*\endcsname\relax
\makeatother

\usepackage{amsmath}
\usepackage{amssymb}
\usepackage{amsthm}

\usepackage{bm}
\usepackage{booktabs}
\usepackage{comment}

\usepackage{enumerate}

\usepackage{graphicx}
\graphicspath{{./}{Figs/}}
\usepackage[colorlinks=True,linkcolor=blue,citecolor=blue,urlcolor=blue]{hyperref}
\usepackage{mathtools}

\usepackage{subcaption}
\usepackage{xcolor}
\usepackage{tabularx}
\usepackage{mathrsfs}
\mathtoolsset{showonlyrefs}

\usepackage{tikz}
\usetikzlibrary{positioning,arrows.meta}

\DeclareMathOperator{\diag}{diag}
\DeclareMathOperator{\spn}{span}
\DeclareMathOperator{\spec}{spec}

\definecolor{nullred}{HTML}{9F2305}

\newcommand{\abs}[1]{\left\vert #1\right\vert}
\newcommand{\comm}[2]{\left[{#1} , {#2}\right]}
\newcommand{\acomm}[2]{\left\{{#1} , {#2}\right\}}
\newcommand{\dv}[2]{\frac{\mathrm{d}{#1}}{\mathrm{d}{#2}}}
\newcommand{\pdv}[2]{\frac{\partial {#1}}{\partial{#2}}}
\newcommand{\ip}[2]{\langle {#1} \vert {#2} \rangle}
\newcommand{\ev}[1]{\langle {#1}  \rangle}
\newcommand{\ket}[1]{\left\vert {#1}\right>}

\newcommand{\vb}[1]{\mathbf{#1}}
\newcommand{\dd}{\mathrm{d}}
\newcommand{\cl}{\mathrm{Cl}}

\newcommand{\ca}[1]{\mathrm{Cl}_{#1}(\mathbb{R})}

\newcommand{\iu}{\mathrm{i}}

\newcommand{\bl}[2]{\langle #1 , #2 \rangle}

\newcommand{\K}{\mathrm{K}}

\newcommand{\opr}[1]{\mathrm{\widehat{#1}}}

\theoremstyle{plain}
\newtheorem{lemma}{Lemma}
\newtheorem{theorem}{Theorem}
\newtheorem{corollary}{Corollary}

\definecolor{BLUE}{rgb}{0.16, 0.32, 0.75}

\begin{document}
\title[Tetrads and their intrinsic spin structures]{Tetrads and their intrinsic spin structures}

\author{Tony Lyons}

\address{Department of Computing and Mathematics, South East Technological University, Waterford, Ireland}
\ead{tony.lyons@setu.ie}

\begin{abstract}
In Minkowski spacetime an observer's moving frame may be associated with a tetrad: four orthogonal directions, one timelike and three spacelike. Locally, a physical measurement is always made in relation to some tetrad, and relativistic  spin measurements in particular raise subtleties because spin and momentum cannot be cleanly separated under Lorentz boosts. This paper shows that every tetrad is endowed with an intrinsic spin structure. Each spatial direction may be generated from the timelike direction by a linear operator, and the requirement that these generate an orthogonal tetrad for \emph{every} timelike direction is sufficient, on its own, to endow them with the Lie algebra $\mathfrak{so}(1,3)$, the Pauli algebra $\ca{3,0}$ and the Lie algebra $\mathfrak{su}(2)$ simultaneously. These generators serve three roles at once: they build the spatial legs of every tetrad, they generate the Lorentz boosts between tetrads, and they endow each tetrad with an intrinsic spin structure. The eigenvectors of the spin generators form a null-ray tetrad which is simultaneously a Newman--Penrose-type frame and the set of Dirac rest spinors, while the invertibility of boosts between tetrads yields the Dirac factorisation of the mass-shell condition. Minimal coupling of gauge and spinor fields is forced rather than imposed, the potential being absorbed by the covariant derivative because the path dependence of the phase prevents it from being incorporated by the spinor phase. Wigner rotations, relativistic spin correlations, quantum reference frames and the Coulomb problem are examined within this tetrad framework. This  framework is equivalent to the Hilbert-space formalism of relativistic quantum mechanics wherever the two have been compared, and so offers a geometric formulation of that formalism rather than an alternative to it.
\end{abstract}

\noindent{\it Keywords\/}: Relativistic kinematics, Tetrads, Clifford algebra, Pauli algebra, Spacetime algebra, Dirac equation, Quantum reference frames, Wigner rotations

\vspace{2mm}
\noindent{\small\itshape This is the Accepted Manuscript version of an article accepted for publication in Physica Scripta. IOP Publishing Ltd is not responsible for any errors or omissions in this version of the manuscript or any version derived from it. This Accepted Manuscript is published under a CC BY licence.}

\maketitle

\section{Introduction}\label{sec:intro}
Tetrads are typically used in curved spacetime (see \cite{Rov2004, ZZP2006} for example) where they appear as moving frames along observer worldlines, serving as local Lorentz reference frames. In Minkowski spacetime no such locally inertial reference frame is required, since any inertial reference frame is global. Nevertheless, every inertial reference frame may be associated with a tetrad, comprising the four velocity of its associated observer together with the three spatial directions orthogonal to that timelike direction. Moreover, since tetrads may be transported in flat spacetime without alteration, they provide a useful framework by which distant observers may meaningfully compare experimental results. Operationally, a tetrad may be defined by signal exchange, with signals propagating at an invariant speed across all tetrads \cite{Ale1967, EPS2012}.

The aim of this work is to demonstrate that each such tetrad is generated by a system of linear operators applied to the four-velocity of an observer, so that the tetrad generated by any four velocity is orthogonal with respect to the Minkowski metric. Universality of these linear operators for all four velocities is sufficient to ensure they are endowed with the $\mathfrak{so}(1,3)$ Lie algebra, the $\ca{3,0}$ Clifford product, and the $\mathfrak{su}(2)$ Lie algebra. Given that the tetrad generators are in the $\mathfrak{su}(2)$ Lie algebra, this ensures the tetrads are endowed with an intrinsic spin structure, while the $\mathfrak{so}(1,3)$ character of the tetrad generators also means they serve to generate transformations between tetrads. Thus these linear operators generate orthogonal spatial directions from a timelike direction, they endow that tetrad with an intrinsic spin structure and they implement the transformations of these spin structures across all such tetrads.

Our aim then is to compare this intrinsic tetrad spin structure with the standard spin structures encountered in relativistic quantum mechanics. The ingredients of relativistic quantum mechanics -- the Pauli and spacetime Clifford algebras, the $\mathfrak{su}(2)$ spin algebra, the Dirac operator and its associated spinors -- are, in the standard development, introduced axiomatically. One begins with a Hilbert space, imposes the relevant commutation and anti-commutation relations as postulates, and then derives relativistic wave equations and the operational content of spin. Each algebraic property is stipulated separately, and their mutual consistency is then confirmed. This formalism is shown to be recoverable in the tetrad formalism, with classical Hamilton--Jacobi theory (cf. \cite{Lan1970, MS1964, Lyo2023}) providing a critical motivation for the development of relativistic wave equations from this tetrad framework.

The Clifford-algebraic treatment of Lorentz transformations developed by Zeni and Rodrigues~\cite{ZR1992}, the spacetime algebra programme of Hestenes~\cite{Hes2015}, the null-ray formalism of Penrose and Rindler~\cite{PR1984}, and the geometric-algebra-based treatments of quantum mechanics in Doran and Lasenby~\cite{DL2003}, Baylis~\cite{Bay1992}, Somaroo, Lasenby and Doran~\cite{SLD1999}, and Havel and Doran~\cite{HD2002} each show that substantial parts of relativistic quantum mechanics can be recast in geometric language, often more transparently than in the Hilbert-space setting. Baylis, Cabrera and Keselica~\cite{BCK2010} go further and argue for a geometric origin of fermion spin itself. Closest to the present construction is the eigenspinor formalism of G\"ursey~\cite{Gur1957} and Baylis~\cite{Bay1992}, in which the Lorentz transformation from a particle's rest frame to an observer's frame is itself treated as a spinor, the four velocity appearing as one of its bilinear covariants and the relation $p=mu$ taking the form of a Dirac equation. A parallel line of work, initiated by Pauli and completed by Rubinow and Keller~\cite{RK1963}, obtains the Dirac equation in the eikonal limit, with the phase satisfying the relativistic Hamilton--Jacobi equation and the spinor amplitude transported along the classical trajectories.

A separate line of work has approached these questions from the side of relativistic quantum information. The observations of Czachor~\cite{Cza1997}, Peres, Scudo and Terno~\cite{PST2002}, and Peres and Terno~\cite{PT2004} that spin and momentum cannot in general be decoupled under Lorentz boosts --- the reduced spin density matrix is not Lorentz-covariant, and the usual operational identification of spin with Stern--Gerlach measurements becomes problematic at relativistic velocities --- have motivated extensive subsequent analysis of spin--momentum entanglement~\cite{GA2002, BT2005, FBHH2010} and of the appropriate covariant spin observables~\cite{BAKG2014, SV2012}. The quantum reference frame programme of Giacomini, Castro-Ruiz and Brukner~\cite{GCB2019b} addresses this by giving spin an operational meaning as a superposition of Lorentz boosts between frames; Taillebois and Avelar~\cite{TA2021} have argued on representation-theoretic grounds that any acceptable relativistic spin observable must be intrinsic in a precise sense; and Palge and Pfeifer~\cite{PP2024} have recently given the Wigner rotation a geometric reading as a holonomy on the mass shell. These developments are complementary rather than competing, and each offers a distinct insight into the geometric content of the Hilbert-space formalism.

The central result of the paper is illustrated in Figure \ref{fig:schematic} where two tetrads $e_{a}$ and $n_{a}$ may be generated by the same set of linear maps $\Omega_{i}$ from different four velocities $e_0$ and $n_0$. These tetrads are endowed with associated spin structures, the null ray eigenvectors of $\Omega_{3}$, shown as $\kappa_{a}$ and $\lambda_{a}$. Figure \ref{fig:flow-chart} outlines the logical flow: a single condition of universality on the tetrad generators is sufficient to ensure their $\mathfrak{so}(1,3)$, $\ca{3,0}$ and $\mathfrak{su}(2)$ structures. Universality of the generators follows from the fact $k_{i}=e_0+\Omega_ie_0$ is a null ray, from which orthogonality of $e_0$ and $\Omega_ie_0$ for any four velocity $e_0$ follows. The embedding of the Pauli algebra $\ca{3,0}$ in the spacetime algebra $\ca{1,3}$ means there is a corresponding representation in this Clifford algebra also. This embedding means the invertibility of tetrad maps in the Pauli algebra translates to the Dirac factorisation of the mass shell condition in the spacetime algebra. Dirac operators then arise as a compatibility condition between tetrads. The eigenvectors of these Dirac operators are also identifiable with the null ray tetrad. 
\begin{figure}[htbp]
\centering
\begin{subfigure}[c]{0.5\textwidth}
\centering
\resizebox{\linewidth}{!}{%
\begin{tikzpicture}[
  font=\scriptsize, x=1cm, y=1cm,
  leg/.style={-{Stealth[length=2mm]}, very thick},
  gen/.style={-{Stealth[length=1.4mm]}, dashdotted, thin},
  nul/.style={nullred, densely dotted, thick},
  ar/.style={-{Stealth[length=2mm]}, thick}]

\coordinate (O) at (0,0);
\draw[nul] (O) -- ++(45:2.35)
      node[pos=0.75, sloped, above, nullred] {$\Omega_{3}\kappa_{a}=\pm\kappa_{a}$};
\draw[leg] (O) -- ++(0,2.2) node[above] {$e_{0}$};
\draw[leg] (O) -- ++(2.2,0) node[right] {$e_{i}$};
\draw[gen] ([shift=(90:0.8)]O) arc[start angle=90, end angle=0, radius=0.8];
\node at (1.12,0.42) {$\Omega_{i}$};

\coordinate (P) at (4.9,0);
\draw[nul] (P) -- ++(45:2.9)
      node[pos=0.68, sloped, above, nullred] {$\Omega_{3}\lambda_{a}=\pm\lambda_{a}$};
\draw[leg] (P) -- ++(77.62:2.3) node[above] {$n_{0}$};
\draw[leg] (P) -- ++(12.38:2.3) node[right] {$n_{i}$};
\draw[gen] ([shift=(77.62:0.8)]P) arc[start angle=77.62, end angle=12.38, radius=0.8];
\node at (6.02,0.52) {$\Omega_{i}$};

\draw[ar] ([shift=(118:2.75)]3.25,0.25) arc[start angle=118, end angle=62, radius=2.75];
\node[above] at (3.25,3.05) {$\Lambda(b)=\exp\!\left(b\,\hat{u}^{\,i}\Omega_{i}\right)$};
\end{tikzpicture}}
\caption{Two tetrads related by the boost
$\Lambda(b)$. The generator $\Omega_{i}$ maps $e_{0}$ into $e_{i}$  and $n_0$ into $n_i$. The null rays $\kappa_{a}$ and $\lambda_{a}$ are eigenvectors  of
$\Omega_{3}$. The tetrads, the null ray tetrads and their transformations are all generated by the same $\Omega_{i}$ for every four velocity.}\label{fig:schematic}
\end{subfigure}\hfill
\begin{subfigure}[c]{0.5\textwidth}
\centering
\resizebox{\linewidth}{!}{%
\begin{tikzpicture}[
  font=\small,
  box/.style={draw, rounded corners=2pt, align=center, inner sep=4pt, minimum height=8mm},
  hyp/.style={box,  very thick},
  res/.style={box},
  ar/.style={-{Stealth[length=2mm]}, thick},
  node distance=6mm and 7mm]
\node[hyp] (U) {Universal orthogonality\\ $\langle\Omega_{i}e_{0},\Omega_{j}e_{0}\rangle=-\delta_{ij}$ for all $e_{0}\in\mathcal{H}$\\ \footnotesize\S\ref{sec:universal}};
\node[res, below left=8mm and 10mm of U] (so) {$\mathfrak{so}(1,3)$ \footnotesize Lem.~\ref{lem:metric-anti-symmtry}};
\node[res, below=8mm of U] (cl) {$\ca{3,0}$ \footnotesize Thm.~\ref{thm:Clifford-product}};
\node[res, below right=8mm and 10mm of U] (su) {$\mathfrak{su}(2)$ \footnotesize Cor.~\ref{cor:su2}};
\node[res, below=8mm of so] (boost) {boosts between\\ tetrads \footnotesize\S\ref{sec:Pauli-Lorentz-reps}};
\node[res, below=8mm of cl] (legs) {spatial legs\\ $e_{i}=\Omega_{i}e_{0}$ \footnotesize\S\ref{sec:universal}};
\node[res, below=8mm of su] (spin) {null-ray tetrad\\ $\kappa_{a}$ \footnotesize\S\ref{sec:spin-ops}};
\node[res, below=9mm of boost] (wig) {Wigner rotation,\\ spin correlations\\ \footnotesize\S\ref{sec:Wigner}, \S\ref{sec:singlet}};
\node[res, below=9mm of legs] (dirac) {Dirac operator\\ $(p^{a}\gamma_{a})^{2}=m^{2}$ \footnotesize\S\ref{sec:mass-shell}};
\node[res, below=9mm of spin] (rest) {rest spinors\\ $\xi^{\pm}_{s}$ \footnotesize\S\ref{sec:spinors}};
\node[res, below=9mm of dirac, text width=52mm] (app) {intrinsic spin observables, QRF comparison, hydrogenic states \footnotesize\S\ref{sec:qrf-comparison}, \S\ref{sec:hydrogenic}};
\foreach \t in {so,cl,su} \draw[ar] (U) -- (\t);
\draw[ar] (so) -- (boost);  \draw[ar] (cl) -- (legs);  \draw[ar] (su) -- (spin);
\draw[ar] (boost) -- (wig); \draw[ar] (legs) -- (dirac); \draw[ar] (spin) -- (rest);
\draw[ar] (wig) |- (app);   \draw[ar] (dirac) -- (app); \draw[ar] (rest) |- (app);
\end{tikzpicture}}
\caption{A single universality condition on the tetrad generators forces three algebraic
structures simultaneously. These in turn yield Wigner rotations, spin correlations and Dirac currents from the spacetime algebra representation.}\label{fig:flow-chart}
\end{subfigure}
\caption{A graphical illustration of the main result of this paper and a flow chart of the logical flow of the results that follow}
\end{figure}
\subsection{Contribution and relation to existing work}\label{sec:novelty}
In each of the treatments above the algebra is taken as given: the Clifford
product is imposed, the basis vectors are postulated, and the geometry is read
off afterwards. What is not addressed is why the algebra should take the form it
does, nor why the same generators should govern spatial directions, boosts
between frames and spin simultaneously. If a single system of operators is to
produce an orthogonal tetrad from \emph{every} timelike direction, and not
merely from one, then no further postulate is available, and
Section~\ref{sec:tetrads} shows that this requirement alone forces all three
structures together with the identification of the imaginary unit as the
pseudoscalar.

The results believed to be new are: that universal orthogonality forces the
simultaneous $\mathfrak{so}(1,3)$, $\ca{3,0}$ and $\mathfrak{su}(2)$ structures
of the tetrad generators (\S\ref{sec:so13}--\S\ref{sec:su2}); that the
eigenvectors of the spin generators form a null ray tetrad which is
simultaneously a Newman--Penrose-type frame and the set of Dirac rest spinors
(\S\ref{sec:spin-ops}, \S\ref{sec:spinors}); that the Dirac factorisation of the
mass shell condition is a reformulation of the invertibility of a boost between
tetrads (\S\ref{sec:mass-shell}); that the spin correlations of an entangled
pair of null ray tetrads are independent of the boost relating them
(\S\ref{sec:singlet}); that minimal coupling is forced by the path dependence of
the phase when $F_{\mu\nu}\neq0$ (\S\ref{sec:min-coupling}); and that the
intrinsic spin observable of Taillebois and Avelar~\cite{TA2021} is realised
under a non-unitary spinor boost, the criterion being secured by an isometry of
the indefinite inner product rather than by unitarity
(\S\ref{sec:intrinsic}).

The advantages are of two kinds. Conceptually, the algebraic apparatus of Dirac
theory is obtained as a consequence of relativistic kinematics rather than
stipulated alongside it, and the covariant derivative acquires a geometric
origin in the non-exactness of the potential. Computationally, certain results
become correspondingly shorter: the Wigner angle is read from a composite rotor
in a single line of Clifford algebra (\S\ref{sec:Wigner}), and the hydrogenic
ground state follows from the involution $\hat{\rho}=\hat{r}^{i}\alpha_{i}$
without partial-wave decomposition or spin-angular functions
(\S\ref{sec:hydrogenic}). By contrast the Clifford and spacetime algebras
themselves, the Lorentz rotor of~\cite{ZR1992}, the eigenspinor
of~\cite{Gur1957, Bay1992}, the Dirac equation, the Wigner rotation, the eikonal
expansion of~\cite{RK1963} and the quantum-reference-frame construction
of~\cite{GCB2019b} are established results; what is offered is a derivation of
their common algebraic origin, not a replacement for them.

\section{Orthogonal tetrads}\label{sec:tetrads}
\subsection{Universal tetrad generators}\label{sec:universal}
A tetrad $\{e_{a}\}_{a=0}^{3}$ is a collection of vectors in Minkowski spacetime $\mathbb{R}^{1,3}$ which are orthogonal with respect to the bilinear form $\langle\cdot,\cdot\rangle$ where
\begin{equation}
\bl{e_{a}}{e_{b}} = e_{a}^{\top}\eta e_{b} = \eta_{ab}\qquad \eta=\diag(1,-1,-1,-1).
\end{equation}
The timelike vector $e_{0}$ of any tetrad belongs to the unit hyperboloid 
\[\mathcal{H}=\left\{e_0\in\mathbb{R}^{1,3}\;\middle|\; \bl{e_0}{e_0}=1\right\}\]
while the orthogonal complement of any $e_{0}\in\mathcal{H}$, namely
\begin{equation}
e_{0}^{\perp}\coloneq \left\{v\in\mathbb{R}^{1,3}\; \middle|\: \bl{e_0}{v}=0\right\}
\end{equation}
is spanned by associated spacelike vectors $\{e_{i}\}_{i=1}^{3}$. Operationally, a timelike vector $e_0$ is the four velocity of an observer/particle while the spacelike directions $e_{i}$ arise from signal propagation along null rays  
\begin{equation}
k_{i} \coloneq e_i + e_0,
\end{equation} 
such that $\bl{k_i}{k_i}=0$ for $i=1,2,3$ (see \cite{Ale1967, EPS2012} for example). 

It is an elementary fact of linear algebra that given the orthogonal tetrad $\{e_{a}\}_{a=0}^{3}$, there are linear operators $\{\Omega_{i}\}_{i=1}^{3}$ such that 
\begin{equation}
e_i = \Omega_{i}e_{0},\quad i=1,2,3.
\end{equation} 
In fact for any pair $\{e_{0}$, $e_{i}\}$ there are an infinite number of such linear operators. The main result of this section demonstrates that there are linear operators $\Omega_{i}$ such that 
\begin{equation}\label{eq:ortho}
\bl{\Omega_i e_0}{e_0} = 0\quad \bl{\Omega_i e_0}{\Omega_j e_0}=-\delta_{ij} \qquad \forall \ e_{0}\in\mathcal{H}.
\end{equation}
That is to say the tetrad generators $\Omega_{i}$ are not simply restricted to a specific timelike vector $e_{0}$ but rather may be used to generate an orthogonal tetrad from any unit timelike vector $e_{0}\in\mathcal{H}$.

Physically, we may consider two labs with associated tetrads $e_{a}$ and $n_{a}$ initially at rest in relation to each other in which case $e_0=n_0$. As such, the orthogonal complement of each timelike vector may be generated by a common set of linear operators so that 
$$
\langle e_0,\Omega_{i}e_0 \rangle=0 \qquad \langle n_0,\Omega_{i}n_0 \rangle=0. 
$$ 
Since $e_{i}$ and $n_{i}$ are operationally defined by signal propagation with $n_i=e_i=k_i-e_0$ we now find
\[(\mathbf{1}+\Omega_i)n_0 = (\mathbf{1}+\Omega_i)e_0=k_i \] 
are null,  and so it is required that 
\[ \langle (\mathbf{1}+\Omega_i) e_0 , (\mathbf{1}+\Omega_i) e_0 \rangle = 0\]
and likewise for $n_0$. 

Let us now consider the case where the lab frames are in uniform relative motion so that $n_0=\Lambda e_0$. The principle of relativity ensures $(\mathbf{1}+\Omega_{i})e_0=k_i$ is null in the rest frame coordinates of $e_{a}$. Likewise in the rest frame coordinates of $n_a$ it is also required that $(\mathbf{1}+\Omega_{i})n_0=k_i$ is null. However, in the rest frame coordinates of $e_{a}$ we now have $(\mathbf{1}+\Omega_{i})n_{0}=(\mathbf{1}+\Omega_{i})\Lambda e_{0}$. Since the nullness of the operational vector  $(\mathbf{1}+\Omega_{i})n_{0}$ is a property of the vector itself and not a coordinate-system artefact, it follows that
\[ \langle (\mathbf{1}+\Omega_{i})\Lambda e_{0} , (\mathbf{1}+\Omega_{i})\Lambda e_{0} \rangle = 0. \] 
Hence $\Omega_{i}$ may be used to generate an orthogonal  tetrad from any $n_{0}\in\mathcal{H}$. Moreover, since $\Lambda^{\top}\eta=\eta\Lambda^{-1}$ it is also found that
\begin{equation}\label{eq:new_gens}
\langle (\mathbf{1}+\Omega_i) \Lambda e_0 , (\mathbf{1}+\Omega_i) \Lambda e_0 \rangle 
= \langle (\mathbf{1}+\Lambda^{-1}\Omega_i\Lambda) e_0 , (\mathbf{1}+\Lambda^{-1}\Omega_i\Lambda) e_0 \rangle = 0
\end{equation}
so that $\Lambda^{-1}\Omega_{i}\Lambda$ is another set of tetrad generators.

This universality of the tetrad generators $\Omega_{i}$ is sufficient to ensure they furnish a representation of the Lie algebra $\mathfrak{so}(1,3)$ (Lemma \ref{lem:metric-anti-symmtry}), Pauli algebra $\cl_{3,0}(\mathbb{R})$ (Theorem \ref{thm:Clifford-product}) and the Lie algebra $\mathfrak{su}(2)$ (Corollary \ref{cor:su2}). The crucial condition here is universality: the orthogonality condition \eqref{eq:ortho} must hold for every $e_{0}\in\mathcal{H}$, not merely for one. This single condition is what forces the algebraic structure in the three results that follow, with no further input.

\subsection{The $\mathfrak{so}(1,3)$ Lie algebra }\label{sec:so13}
\begin{lemma}[Metric anti-symmetry of tetrad generators]\label{lem:metric-anti-symmtry}
If 
\[\bl{\Omega_{i}e_0}{e_0} = \bl{e_0}{\Omega_{i}e_0}=0\]
for all $e_{0}\in\mathcal{H}$ then
\[\Omega_{i}^{\top}\eta+\eta\Omega_{i}=0.\]
\end{lemma}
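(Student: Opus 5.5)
The plan is to turn the hypothesis into a statement about a quadratic form and then use polarisation. Set $M_i \coloneq \eta\Omega_i$, so that $\bl{e_0}{\Omega_i e_0} = e_0^{\top} M_i e_0$. Then $\Omega_i^{\top}\eta + \eta\Omega_i = M_i^{\top}+M_i$, using $\eta^{\top}=\eta$. It therefore suffices to show that the symmetric matrix $S_i \coloneq M_i + M_i^{\top}$ vanishes. The hypothesis says $e_0^{\top} S_i e_0 = 2\,e_0^{\top}M_i e_0 = 0$ for every $e_0\in\mathcal{H}$.

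The key step is to extend this vanishing from the hyperboloid to all timelike vectors, and then to all of $\mathbb{R}^{1,3}$. The quadratic form $q(v)=v^{\top}S_i v$ is homogeneous of degree two. Any $v$ with $\bl{v}{v}>0$ can be written $v=\lambda e_0$ with $\lambda=\sqrt{\bl{v}{v}}$ and $e_0\in\mathcal{H}$, so $q(v)=\lambda^2 q(e_0)=0$. The set of timelike vectors is a nonempty open subset of $\mathbb{R}^4$, and $q$ is a polynomial vanishing on it. Hence $q\equiv 0$, since a polynomial vanishing on an open set is identically zero; one can also argue by continuity and density of the timelike cone's span.

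To avoid appealing to analyticity, I would instead polarise explicitly. Take the timelike vectors $t=(1,0,0,0)$ and $t\pm\epsilon\, \hat{x}_j$ for small $\epsilon$, and also $t+\epsilon(\hat{x}_j\pm\hat{x}_k)$. Because $q$ vanishes on each of these, taking second differences gives in turn $S_{00}=0$, $S_{0j}=0$, $S_{jj}=0$ and $S_{jk}=0$.

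Finally, a symmetric matrix whose quadratic form vanishes identically is zero, by the polarisation identity $2v^{\top}Sw=q(v+w)-q(v)-q(w)$. So $S_i=0$, that is $\Omega_i^{\top}\eta+\eta\Omega_i=0$. The only point needing care is the extension from $\mathcal{H}$ to a set rich enough to determine $S_i$. This is handled by homogeneity together with the openness of the timelike cone, and I expect it to be the main, though mild, obstacle. Note that the orthonormality condition $\bl{\Omega_ie_0}{\Omega_je_0}=-\delta_{ij}$ is not needed for this lemma.
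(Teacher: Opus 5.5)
Your argument is correct and is essentially the paper's proof. Both rescale to extend the vanishing of $v^{\top}(\Omega_i^{\top}\eta+\eta\Omega_i)v$ from $\mathcal{H}$ to the open timelike cone $\mathcal{T}$, then use openness to force the symmetric matrix to vanish: the paper does this via $u+tv\in\mathcal{T}$ for small $|t|$ together with a basis of timelike vectors, and you do it via the explicit test vectors $t\pm\epsilon\hat{x}_j$ and $t+\epsilon(\hat{x}_j+\hat{x}_k)$. The one loose point is your aside about ``density of the timelike cone's span'', which would not suffice on its own, since a quadratic form can vanish on a spanning set without vanishing identically. Your main argument does not depend on it.
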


\begin{proof}
The set of timelike vectors
\begin{equation}
\mathcal{T}\coloneq \left\{v\in\mathbb{R}^{1,3}\;\middle|\; \bl{v}{v}>0 \right\}
\end{equation}
is open in $\mathbb{R}^{1,3}$ with respect to the usual Euclidean topology on $\mathbb{R}^{4}$. 
Given any vector $v\in\mathcal{T}$ there is a corresponding vector $e_{0}\in\mathcal{H}$ with 
\[e_{0}=\frac{1}{\sqrt{\bl{v}{v}}}v \qquad \bl{e_{0}}{e_{0}}=1.\]
Since $\bl{\Omega_{i}e_{0}}{e_0}=\bl{e_0}{\Omega_{i}e_{0}}=0$ for all $e_{0}\in\mathcal{H}$, it now follows that 
\begin{equation}
\bl{\Omega_{i}e_{0}}{e_0}+\bl{e_0}{\Omega_{i}e_{0}}=0 \implies \frac{1}{\bl{v}{v}}\left(\bl{\Omega_{i}v}{v}+\bl{v}{\Omega_{i}v}\right) = 0.
\end{equation}
It follows that the symmetric quadratic form
$$
Q_{i}(v)\coloneq \bl{\Omega_{i}v}{v}+\bl{v}{\Omega_{i}v}
$$
is vanishing for all $v\in \mathcal{T}$. The polarisation identity ensures the quadratic form $Q_{i}(\cdot)$ may be associated with a symmetric bi-linear form $B_{i}(\cdot,\cdot)$ according to
\begin{equation}
B_{i}(u , v) \coloneq \frac{1}{2}\left(Q_{i}(u+v)-Q_{i}(u)-Q_{i}(v)\right)\quad \forall\ u,v\in\mathcal{T},
\end{equation}
see ref. \cite{HK1971} for example.  
Moreover, since $\mathcal{T}$ is open in $\mathbb{R}^{4}$, then for any $u,v\in\mathcal{T}$ there is some  $\delta>0$ such that $u+tv\in\mathcal{T}$ for all $\abs{t}<\delta$. This now means that   
\begin{equation}
Q_{i}(u+tv) = Q_{i}(u) + 2tB_{i}(u,v)+t^2Q_{i}(v)=0 \implies tB_{i}(u,v)=0 
\end{equation}
for all $u,v\in\mathcal{T}$ and $\abs{t}<\delta$.  

As $\mathcal{T}$ is open in $\mathbb{R}^{4}$ and non-empty, it contains an open ball, and any open ball in $\mathbb{R}^{4}$ contains four linearly independent vectors. Hence $\mathcal{T}$ contains vectors $\{v_{a}\}_{a=0}^3$ which span $\mathbb{R}^{1,3}$. Since $B_{i}(u,v)=0$ for all $u,v\in\mathcal{T}$ it now follows that
\begin{equation}
B_{i}(v_{a},v_{b}) = v_{a}^{\top}\left(\Omega_{i}^{\top}\eta +\eta \Omega_{i}\right)v_{b} = 0.
\end{equation}
Given that $\spn\{v_{a}\}_{a=0}^{3}=\mathbb{R}^{1,3}$ it follows that 
$$\Omega_{i}^{\top}\eta +\eta \Omega_{i}=0.$$
Hence $\Omega_{i}\in\mathfrak{so}(1,3)$ and so $\{\Omega_{i}\}_{i=1}^{3}$ are generators of the Lorentz group (see \cite{Wei1995} for example).  
\end{proof}

\subsection{The $\ca{3,0}$ Clifford algebra}\label{sec:Cl30}
The main result of this section is to now show that these linear operators $\Omega_{i}$ are necessarily endowed with  the Clifford product
\begin{equation}\label{eq:CP-Pauli}
\Omega_{i}\Omega_{j} + \Omega_{j}\Omega_{i}=2\delta_{ij}\mathbf{1},\qquad \mathbf{1}=\diag(1,1,1,1)
\end{equation}
and so furnish a representation of the Pauli algebra $\ca{3,0}$, see \cite{Lou2001} for example.

\begin{theorem}[Clifford products for tetrad generators]\label{thm:Clifford-product}
If the vectors  $e_{0}$ and $e_{i}=\Omega_{i}e_{0}$ form an orthogonal tetrad with respect to the bi-linear form $\bl{\cdot}{\cdot}$ for all $e_{0}\in\mathcal{H}$, such that 
\begin{equation}
\bl{e_0}{\Omega_{i}e_0}=0\quad \bl{\Omega_{i}e_0}{\Omega_{j}e_0}=-\delta_{ij},
\end{equation}
then the linear operators $\Omega_{i}$ are endowed with the Clifford product
\begin{equation}
\Omega_{i}\Omega_{j}+\Omega_{j}\Omega_{i}=2\delta_{ij}\mathbf{1}.
\end{equation}
\end{theorem}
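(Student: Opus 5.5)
The plan is to reduce the statement to the same polarisation argument used in Lemma~\ref{lem:metric-anti-symmtry}, and then use the metric anti-symmetry from that lemma to turn a statement about $\Omega_i^{\top}\eta\Omega_j$ into one about the product $\Omega_i\Omega_j$.

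\textbf{Step 1 (homogeneous extension).} First extend the hypothesis from $\mathcal{H}$ to the open cone $\mathcal{T}$. For $v\in\mathcal{T}$, set $e_0=v/\sqrt{\bl{v}{v}}\in\mathcal{H}$. Bilinearity then gives
\[\bl{\Omega_i v}{\Omega_j v}=-\delta_{ij}\bl{v}{v}\qquad\forall\,v\in\mathcal{T}.\]
The symmetric combination
\[Q_{ij}(v)\coloneq \bl{\Omega_i v}{\Omega_j v}+\bl{\Omega_j v}{\Omega_i v}+2\delta_{ij}\bl{v}{v}=v^{\top}M_{ij}v\]
therefore vanishes on $\mathcal{T}$. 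Here
\[M_{ij}=\Omega_i^{\top}\eta\Omega_j+\Omega_j^{\top}\eta\Omega_i+2\delta_{ij}\eta,\]
and $M_{ij}$ is a symmetric matrix.

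\textbf{Step 2 (polarisation on the open cone).} Repeat the argument of Lemma~\ref{lem:metric-anti-symmtry} with $Q_{ij}$ in place of $Q_i$. Since $\mathcal{T}$ is open, for $u,v\in\mathcal{T}$ we have $u+tv\in\mathcal{T}$ for small $\abs{t}$. Expanding $Q_{ij}(u+tv)=0$ in powers of $t$ shows that the associated bilinear form satisfies $u^{\top}M_{ij}v=0$ for all $u,v\in\mathcal{T}$. Because $\mathcal{T}$ contains a basis of $\mathbb{R}^{1,3}$, this gives $M_{ij}=0$. This is the only step requiring care: it is where universality over all $e_0$ enters, and a single $e_0$ would give only one scalar equation. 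It is, however, a verbatim adaptation of the earlier proof.

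\textbf{Step 3 (anti-symmetry to Clifford).} The hypothesis $\bl{e_0}{\Omega_i e_0}=0$ on $\mathcal{H}$ is exactly the hypothesis of Lemma~\ref{lem:metric-anti-symmtry}, so $\Omega_i^{\top}\eta=-\eta\Omega_i$. Substituting this into $M_{ij}=0$ gives
\[-\eta\Omega_i\Omega_j-\eta\Omega_j\Omega_i+2\delta_{ij}\eta=0.\]
Multiplying on the left by $\eta^{-1}=\eta$ yields $\Omega_i\Omega_j+\Omega_j\Omega_i=2\delta_{ij}\mathbf{1}$.

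As a sanity check on signs, the case $i=j$ gives $\Omega_i^2=\mathbf{1}$. This is consistent with $\Omega_i$ being boost generators, for example the generator of boosts along the $x$-axis squares to the identity on the $t$--$x$ block. Strictly, that check would require the boost generators to act nontrivially on all four components, so I would not rely on it beyond confirming signs; the algebra in Step 3 is what establishes the result.
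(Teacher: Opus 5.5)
Your proposal is correct and is essentially the paper's proof: you extend the condition from $\mathcal{H}$ to the open cone $\mathcal{T}$ by homogeneity, reuse the polarisation and open-set argument of Lemma~\ref{lem:metric-anti-symmtry}, and apply the metric anti-symmetry $\Omega_i^{\top}\eta=-\eta\Omega_i$ to turn $\Omega_i^{\top}\eta\Omega_j$ into $-\eta\Omega_i\Omega_j$. The only difference is the order: the paper applies the anti-symmetry first and then polarises $\bl{v}{(\Omega_i\Omega_j+\Omega_j\Omega_i-2\delta_{ij}\mathbf{1})v}$, tacitly using that $\eta(\Omega_i\Omega_j+\Omega_j\Omega_i)$ is symmetric (which follows from the anti-symmetry), whereas you polarise the manifestly symmetric $M_{ij}$ first, so that step needs no extra check.
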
 

\begin{proof}
For any $v\in\mathcal{T}$ there is a corresponding  $e_{0}=\bl{v}{v}^{-1/2}v\in\mathcal{H}$, the orthogonality condition $\bl{\Omega_{i}e_{0}}{\Omega_{j}e_{0}}=-\delta_{ij}$ ensures 
\begin{equation}\label{expr:vivj-1}
\bl{\Omega_{i}v}{\Omega_{j}v} + \bl{\Omega_{j}v}{\Omega_{i}v}=-2\delta_{ij}\bl{v}{v}.
\end{equation}
 Lemma \ref{lem:metric-anti-symmtry} ensures $\Omega_{i}^{\top}\eta+\eta\Omega_{i}=0$ and so the left hand side may be reorganised as 
\begin{equation}\label{expr:vivj-2}
\bl{\Omega_{i}v}{\Omega_{j}v} + \bl{\Omega_{j}v}{\Omega_{i}v} = -\bl{v}{v}e_{0}^{\top}\eta(\Omega_{i}\Omega_{j} + \Omega_{j}\Omega_{i})e_{0} = -\bl{v}{(\Omega_{i}\Omega_{j}+\Omega_{j}\Omega_{i}) v}
\end{equation}
Equating \eqref{expr:vivj-1} and \eqref{expr:vivj-2}, it is found  that the quadratic form
\begin{equation}
Q_{ij}(v) = \bl{v}{(\Omega_{i}\Omega_{j}+\Omega_{j}\Omega_{i} -2\delta_{ij}\mathbf{1})v}=0,
\end{equation}
for all $v\in\mathcal{T}$. 

 The polarisation identity and open-set arguments  applied in Lemma \ref{lem:metric-anti-symmtry} once again ensure the associated bilinear form $\Omega_{i}\Omega_{j}+\Omega_{j}\Omega_{i} -2\delta_{ij}\mathbf{1}$ vanishes. This ensures the tetrad generators $\Omega_{i}$ are endowed with the Clifford product
\begin{equation}
\Omega_{i}\Omega_{j}+\Omega_{j}\Omega_{i}=2\delta_{ij}\mathbf{1}
\end{equation}
so the tetrad generators $\{\Omega_{i}\}_{i=1}^{3}$ are endowed with the Clifford product of the Pauli algebra $\cl_{3,0}(\mathbb{R})$.
\end{proof}
In contrast to the geometric algebra formalism  (see \cite{ZR1992, Hes2015, Lou2001} for example) where abstract Clifford algebras are postulated a priori, the tetrad generators here are matrices endowed with the corresponding Clifford product. This Clifford algebra structure follows directly from the requirement that the matrices generate tetrad legs orthogonal to any four velocity. The novelty of the result here is the  geometric framework provided for the Clifford product \eqref{eq:CP-Pauli} which is typically introduced axiomatically defining the abstract Pauli algebra $\cl_{3,0}(\mathbb{R})$.

\subsection{The $\mathfrak{su}(2)$ Lie algebra}\label{sec:su2}
Orthogonality of the tetrads $\Omega_{i}e_{0}$ for all $e_{0}\in\mathcal{H}$ now ensures $\Omega_{i}\in\ca{3,0}$, which in turn means that $\Omega_{i}$ furnish a representation of the  Lie algebra $\mathfrak{su}(2)$. 
\begin{corollary}[The $\mathfrak{su}(2)$ algebra from tetrad generators]\label{cor:su2}
The linear operators $\{\Omega_{i}\}_{i=1}^{3}$ endowed with the Clifford product  \eqref{eq:CP-Pauli} are also endowed with the $\mathfrak{su}(2)$ Lie algebra
\begin{equation}
\Omega_{i}\Omega_{j} - \Omega_{j}\Omega_{i}=2\iu\varepsilon_{ijk}\Omega_{k}
\end{equation}
where $\varepsilon_{ijk}$ is the Levi-Civita symbol in $\mathbb{R}^{3}$. 
\end{corollary}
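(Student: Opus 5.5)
The plan is to work entirely inside the algebra generated by the $\Omega_{i}$, using only the Clifford product \eqref{eq:CP-Pauli} from Theorem~\ref{thm:Clifford-product}. The key object is the pseudoscalar
\[
I\coloneq \Omega_{1}\Omega_{2}\Omega_{3}.
\]
The imaginary unit $\iu$ appearing in the statement will be identified with $I$. This is forced, because the $\Omega_{i}$ are real $4\times 4$ matrices, so no literal complex scalar is available.

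First, I will establish the two properties of $I$ that make this identification legitimate. For centrality, anticommuting $\Omega_{j}$ through the product $\Omega_{1}\Omega_{2}\Omega_{3}$ produces two sign changes (from the two factors with index $\neq j$) and one commuting factor ($\Omega_{j}^{2}=\mathbf{1}$). Hence $\Omega_{j}I=I\Omega_{j}$ for each $j$, so $I$ commutes with the whole algebra. For the square, a direct reordering gives
\[
I^{2}=\Omega_{1}\Omega_{2}\Omega_{3}\Omega_{1}\Omega_{2}\Omega_{3}=-\mathbf{1},
\]
by moving $\Omega_{1}$ left past two anticommuting factors, then $\Omega_{2}$ past one, and using $\Omega_{i}^{2}=\mathbf{1}$. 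Consequently $\spn_{\mathbb{R}}\{\mathbf{1},I\}$ is a central subalgebra isomorphic to $\mathbb{C}$ via $I\mapsto\iu$. Equivalently, $I$ is a complex structure on $\mathbb{R}^{1,3}$ commuting with every $\Omega_{i}$, so $\mathbb{R}^{4}$ becomes $\mathbb{C}^{2}$ and the $\Omega_{i}$ become complex-linear.

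Second, I compute the commutators. For $i=j$ the commutator vanishes trivially, consistent with $\varepsilon_{iik}=0$. For $i\neq j$, \eqref{eq:CP-Pauli} gives $\Omega_{j}\Omega_{i}=-\Omega_{i}\Omega_{j}$, so
\[
\Omega_{i}\Omega_{j}-\Omega_{j}\Omega_{i}=2\Omega_{i}\Omega_{j}.
\]
For the cyclic case $(i,j,k)=(1,2,3)$, inserting $\Omega_{3}^{2}=\mathbf{1}$ gives
\[
\Omega_{1}\Omega_{2}=\Omega_{1}\Omega_{2}\Omega_{3}\Omega_{3}=I\Omega_{3}.
\]
The analogous reorderings give $\Omega_{2}\Omega_{3}=I\Omega_{1}$ and $\Omega_{3}\Omega_{1}=I\Omega_{2}$. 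Anticyclic orderings pick up a sign, which is exactly $\varepsilon_{ijk}$. Centrality of $I$ lets it be written on either side, so
\[
\Omega_{i}\Omega_{j}-\Omega_{j}\Omega_{i}=2I\varepsilon_{ijk}\Omega_{k},
\]
and under $I\leftrightarrow\iu$ this is the stated relation.

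The main obstacle is not the algebra, which is routine, but justifying the meaning of $\iu$. I would address it through centrality and $I^{2}=-\mathbf{1}$ as above. I would also remark that the $\mathfrak{su}(2)$ structure proper is carried by the operators $\tfrac{1}{2}I\Omega_{k}$ (equivalently $-\tfrac{\iu}{2}\Omega_{k}$). These satisfy real structure relations $[\tfrac12 I\Omega_{i},\tfrac12 I\Omega_{j}]=-\varepsilon_{ijk}\tfrac12 I\Omega_{k}$, with $(I\Omega_{k})^{2}=-\mathbf{1}$, which is the compact real form $\mathfrak{su}(2)$ rather than $\mathfrak{sl}(2,\mathbb{R})$.
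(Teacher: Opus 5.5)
Your algebra is the paper's own argument: anticommutation for $i\neq j$, then $(\Omega_1\Omega_2\Omega_3)^2=-\mathbf{1}$, then $\Omega_1\Omega_2=\mathrm{I}\,\Omega_3$ by inserting $\Omega_3^2=\mathbf{1}$, then cyclic permutation. Your explicit check that $\mathrm{I}$ is central is a useful addition, and $\comm{\Omega_i}{\Omega_j}=2\varepsilon_{ijk}\mathrm{I}\,\Omega_k$ is correctly derived. The problem is the premise you use to decide what $\iu$ means, namely that the $\Omega_i$ are real matrices. They cannot be. If $\Omega$ were real with $\Omega^{\top}\eta+\eta\Omega=0$, then $A=\eta\Omega$ would be real antisymmetric, so $\det\Omega=-\det A=-\mathrm{Pf}(A)^2\le 0$. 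On the other hand, $\Omega^{\top}=-\eta\Omega\eta$ makes the characteristic polynomial of $\Omega$ even, so $\Omega^2=\mathbf{1}$ forces eigenvalues $1,1,-1,-1$ and hence $\det\Omega=1$. So no real matrix satisfies both Lemma~\ref{lem:metric-anti-symmtry} and Theorem~\ref{thm:Clifford-product}. The generators are necessarily complex (in the paper $\Omega_i=K_i+\iu J_i$; see the matrices in \ref{apx:reps}), and for them $\Omega_1\Omega_2\Omega_3=\iu\mathbf{1}$ holds as a literal matrix identity. That identity is what the paper asserts in \eqref{eq:pseudo} and uses later: the conjugates $\Omega_i^{*}$ as the complementary solution, $K_i,J_i$ as real and imaginary parts, and Hermitian norms of boosted null rays. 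Your picture of $\mathrm{I}$ as a real complex structure making $\mathbb{R}^{1,3}$ into $\mathbb{C}^2$, and your claim that no literal complex scalar is available, therefore do not describe this setting.

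The abstract isomorphism $\spn_{\mathbb{R}}\{\mathbf{1},\mathrm{I}\}\cong\mathbb{C}$ is also strictly weaker than $\mathrm{I}=\iu\mathbf{1}$. It cannot distinguish $\iu\mathbf{1}$ from $-\iu\mathbf{1}$, or from a non-scalar $\mathrm{I}$. The Clifford product alone does not make $\mathrm{I}$ scalar: $\Omega_i=\diag(\sigma_i,-\sigma_i)$ satisfies \eqref{eq:CP-Pauli}, yet $\mathrm{I}=\diag(\iu,\iu,-\iu,-\iu)$ and $\comm{\Omega_1}{\Omega_2}\neq 2\iu\Omega_3$. So your $\mathrm{I}$-valued relation is the correct consequence of the stated hypothesis, but the scalar form needs Lemma~\ref{lem:metric-anti-symmtry} as extra input. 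The missing step runs as follows.
\begin{itemize}
\item After a complex change of basis, $\mathfrak{so}(4,\mathbb{C})\cong\mathfrak{sl}_2(\mathbb{C})\oplus\mathfrak{sl}_2(\mathbb{C})$ acts on $\mathbb{C}^2\otimes\mathbb{C}^2$. Write $\Omega_i=X_i\otimes\mathbf{1}+\mathbf{1}\otimes Y_i$ with $X_i,Y_i$ traceless.
\item Then $\Omega_i^2=-(\det X_i+\det Y_i)\mathbf{1}+2X_i\otimes Y_i$. Taking traces, $\Omega_i^2=\mathbf{1}$ forces $X_i\otimes Y_i=0$, so $X_i=0$ or $Y_i=0$.
\item Anticommutation then puts all three generators in the same factor. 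Their triple product is a scalar squaring to $-1$, so $\mathrm{I}=\pm\iu\mathbf{1}$.
\item The sign is a choice of chirality; the other sign is the paper's $\Omega_i^{*}$.
\end{itemize}
The paper's phrase ``since $\Omega_i$ are matrices'' passes over this step too. Your proposal, however, denies the conclusion instead of supplying it.
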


\begin{proof}
The Clifford product \eqref{eq:CP-Pauli} now ensures  $\Omega_{i}\Omega_{j}=-\Omega_{j}\Omega_{i}$ when $i\neq j$. This in turn means that $(\Omega_{1}\Omega_{2}\Omega_{3})^{2}=-\mathbf{1}$. Since $\Omega_{i}$ are matrices, this means that the pseudoscalar $\mathrm{I}=\Omega_{1}\Omega_{2}\Omega_{3}$ may be identified with the complex unit matrix $\iu\mathbf{1}$  where $\iu$ is the imaginary unit and 
\begin{equation}
\mathrm{I} = \Omega_{1}\Omega_{2}\Omega_{3}=\iu\mathbf{1}.\label{eq:pseudo}
\end{equation}   
Since $\Omega_{3}^{2}=\mathbf{1}$, it now follows that 
\begin{equation}
\Omega_{1}\Omega_{2} = \mathrm{I}\Omega_{3}=\iu\Omega_{3}.
\end{equation}
Moreover, the Clifford product \eqref{eq:CP-Pauli} ensures $\Omega_{2}\Omega_{1}=-\Omega_{1}\Omega_{2}$ to give
\[\comm{\Omega_{1}}{\Omega_{2}}=2\iu \Omega_{3}.\]
Similar identities are attained under cyclic permutations so that we deduce
\begin{equation}\label{eq:Lie-su2}
\Omega_{i}\Omega_{j}-\Omega_{j}\Omega_{i}=2\iu\varepsilon_{ijk}\Omega_{k},
\end{equation} 
the $\mathfrak{su}(2)$ Lie algebra. 
\end{proof}
In the geometric algebra framework, the pseudoscalar is an abstract algebra member which is central and whose negative square is the identity element. The geometric framework ensures $\Omega_{i}\in\mathrm{GL}(4;\mathbb{C})$ so that the pseudo-scalar may be identified with the imaginary matrix $\mathrm{I}=\iu\mathbf{1}$. The imaginary unit typically appears via the operator-momentum identification $p^{\mu} \leftrightarrow \iu\hbar \partial^{\mu} $. In the tetrad framework it now follows directly from the orthogonality of the tetrad legs  $e_i=\Omega_{i}e_0$ generated by each four velocity $e_{0}\in\mathcal{H}$.

\section{Kinematics and phase waves}\label{sec:kinetic-phase}

\subsection{Integral curves}\label{sec:curves}
In coordinates $y^{\mu}$ adapted to the tetrad $n_{a}$, the integral curves of $n_{0}^{\mu}=\delta_{0}^{\mu}$ are governed by $\dv{y}{\tau}=n_{0}(y(\tau))$, yielding the worldline congruence
\begin{equation}\label{eq:ic-y}
y(\tau) = (\tau,\vec{y}), \qquad \vec{y}\in\mathbb{R}^{3}.
\end{equation}
Without loss of generality, the particle associated with $n_{a}$ follows $(\tau,\vec{0})$ with four-velocity purely along $n_0$. In coordinates $x^{\mu}$ adapted to the tetrad $e_{a}$, this congruence is
\begin{equation}\label{eq:cong-x}
\dv{x}{\tau} = n_{0}(x(\tau))=\left(u^{0}(x(\tau)) + u^{i}(x(\tau))\Omega_{i}\right)e_{0}.
\end{equation}
The congruence corresponds to level sets of a scalar field $\tau(x)$, mapping $e_a$-adapted coordinates $x^\mu$ to the proper time $\tau$ of $n_a$.
For $\partial\tau(x) \in\mathcal{H}$, normalization requires
\begin{equation}\label{eq:ge1} 
\partial_{\mu}\tau(x)\partial^{\mu}\tau(x) = 1. 
\end{equation}
The associated four-momentum has components $p^\mu = m n_0^\mu = m\partial^\mu\tau(x)$ and satisfies $p_\mu p^\mu = m^2$. The continuity equation $\partial_\mu p^\mu = 0$ may be re-formulated as 
\begin{equation}\label{eq:ge2}
\partial_{\mu}\partial^{\mu}\tau(x) = 0.
\end{equation}

\subsection{Acceleration and boost re-parameterisations}
Observers in uniform relative motion possess tetrads generated by the boost operators $\Omega_i$. To treat accelerated motion symmetrically—allowing either observer to be considered inertial—one must resolve proper-time discontinuities at scattering events. 
\begin{figure}[h!]
\centering
\begin{subfigure}{0.45\textwidth}
\centering
\includegraphics[width=0.99\textwidth]{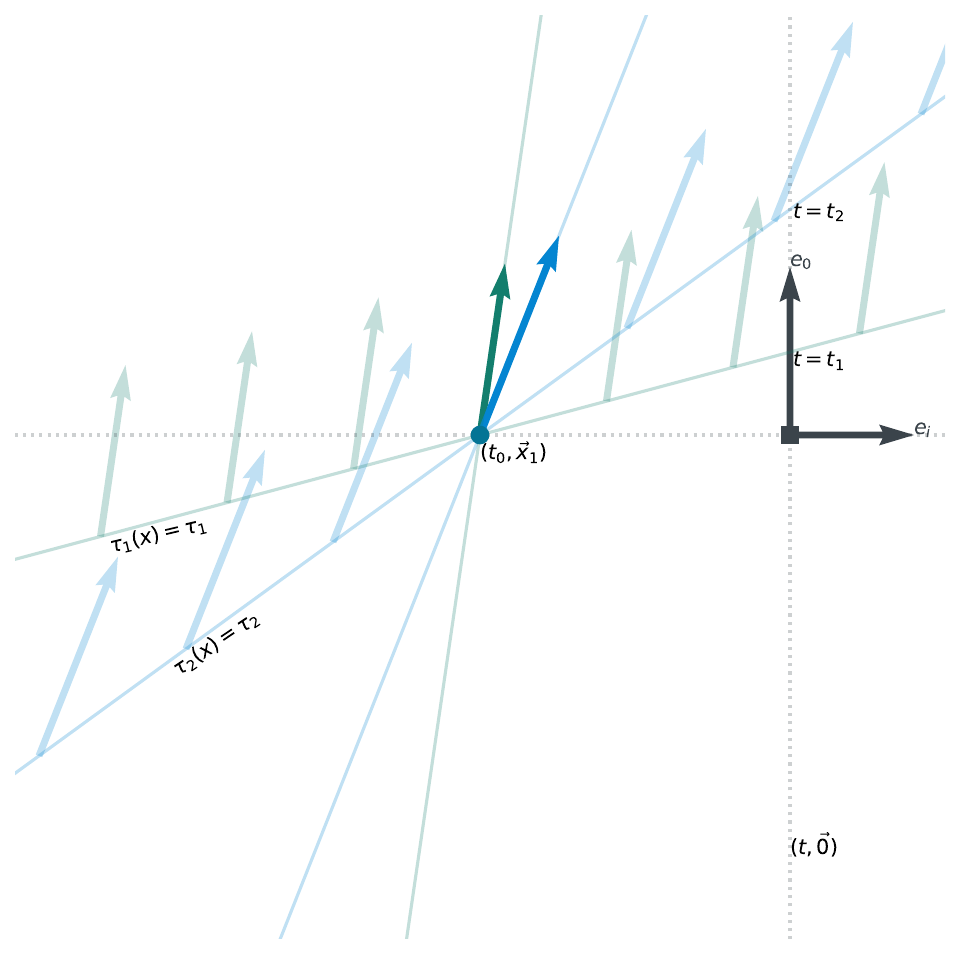}
\caption{Acceleration of $\mathscr{B}$ in the frame $e_{a}$ of $\mathscr{A}$}
\label{fig:acc_in_ea}
\end{subfigure}\hfil
\begin{subfigure}{0.45\textwidth}
\centering
\includegraphics[width=0.99\textwidth]{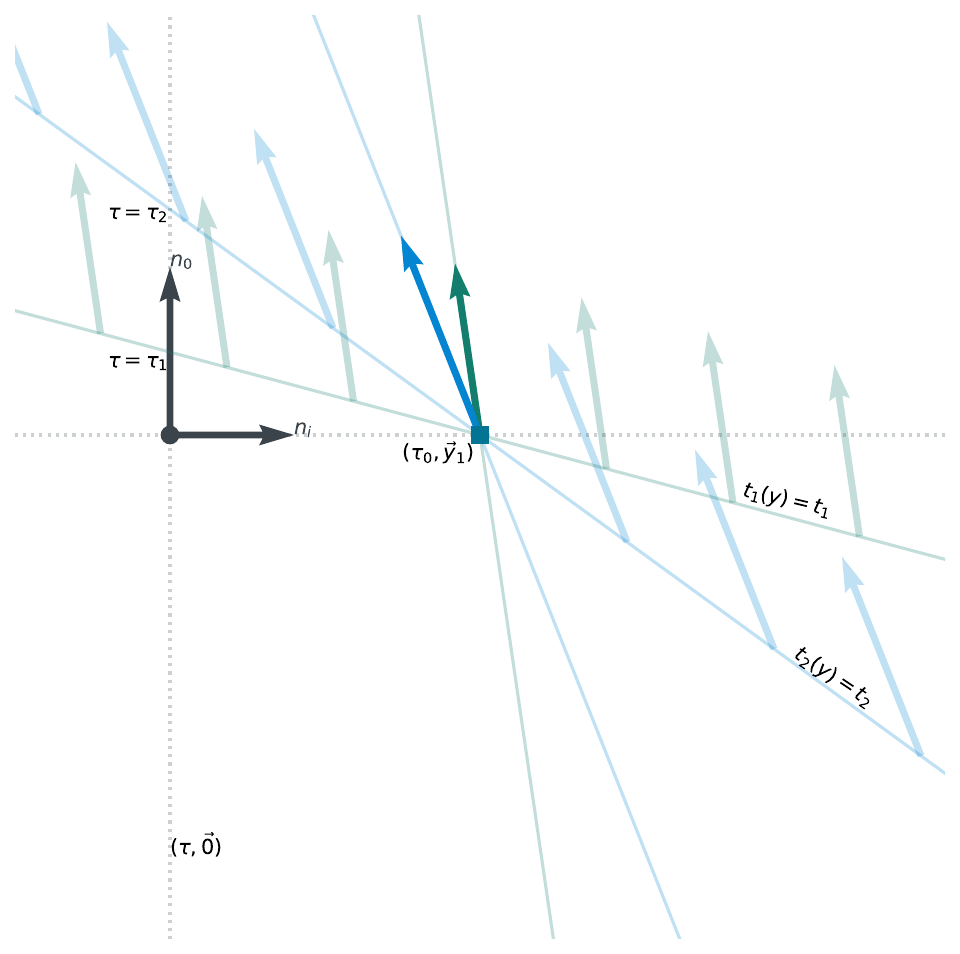}
\caption{Acceleration of $\mathscr{A}$ in the frame $n_{a}$ of $\mathscr{B}$}
\label{fig:acc_in_na}
\end{subfigure}
\caption{Observer scattering described in mutual moving frames.}
\end{figure}
Figures~\ref{fig:acc_in_ea}--\ref{fig:acc_in_na} depict observers $\mathscr{A}$ ($\blacksquare$) and $\mathscr{B}$ ($\bullet$). In $e_a$-adapted coordinates, $\mathscr{B}$ scatters at $x_S = (t_0, \vec{x}_1)$, switching four-velocity gradients from $\partial^a\tau_1(x)$ to $\partial^a\tau_2(x)$. Standard linear potentials $\tau_1(x)=u_\mu x^\mu$ and $\tau_2(x)=v_\mu x^\mu$ yield $u_0 t_0 - \vec{u}\cdot\vec{x}_1 \neq v_0 t_0 - \vec{v}\cdot\vec{x}_1$, causing an apparent discontinuity $\tau_1(x_S) \neq \tau_2(x_S)$ (and vice-versa in frame $n_a$).  To eliminate this artifact and preserve continuous individual proper times, the classical velocity potential $\tau(x)=q_\mu x^\mu$ must be generalised to a potential $\phi(x)$ such that $\tau(x) = g(\phi(x))$ obeys \eqref{eq:ge1}--\eqref{eq:ge2}.

\subsection{Velocity potentials from de Broglie waves}\label{sec:vel-pot}
Level-set matching of $\tau(x)$ and $\phi(x)$ requires invertible maps $\tau(x)=g(\phi(x))$ and $\phi(x)=f(\tau(x))$. Applying the $\tau \to -\tau$ discrete symmetry of the system \eqref{eq:ge1}--\eqref{eq:ge2}  selects $g(\phi) = K\ln(\phi)$ (see \ref{apx:vp}), giving
\begin{equation}
\phi(x) = \mathrm{e}^{K\tau} = \mathrm{e}^{K u_{\mu}x^{\mu}}, \qquad u\in\mathcal{H}.
\end{equation}
Real $K$ preserves the proper-time discontinuity under re-parameterization due to monotonicity of the real exponential function. However, under the imaginary parameterisation $K=i\kappa$ ($\kappa\in\mathbb{R}$) ensures single-valued continuity of the potential $\phi$ across scattering when
\begin{equation}
\kappa (u_{0}t-\vec{u}\cdot\vec{x}) - \kappa (v_{0}t-\vec{v}\cdot\vec{x}) = 2n\pi, \qquad n\in\mathbb{Z}.
\end{equation}
For $\kappa = \pm m/\hbar$, $\phi(x) = \mathrm{e}^{\pm i p_\mu x^\mu / \hbar}$ transforms the governing equations \eqref{eq:ge1}--\eqref{eq:ge2} into the Klein-Gordon equation,
\begin{equation}\label{eq:KG}
\partial^{\mu}\partial_{\mu}\phi + \frac{m^2}{\hbar^2}\phi = 0,
\end{equation}
identifying $\phi$ as a de Broglie phase wave in sync with rest-frame clock frequencies. In Figure~\ref{fig:phase_scatter}, the transition in $\phi$ across a scattering event occurs simultaneously everywhere. Since $\phi$ is not a locally emitted wave, but rather the velocity potential encoding the global frame transformation between $e_a$ and $n_a$, this global shift involves no information transfer and respects locality.
\begin{figure}[h!]
  \centering
  \begin{subfigure}[b]{0.32\textwidth}
    \includegraphics[width=\textwidth]{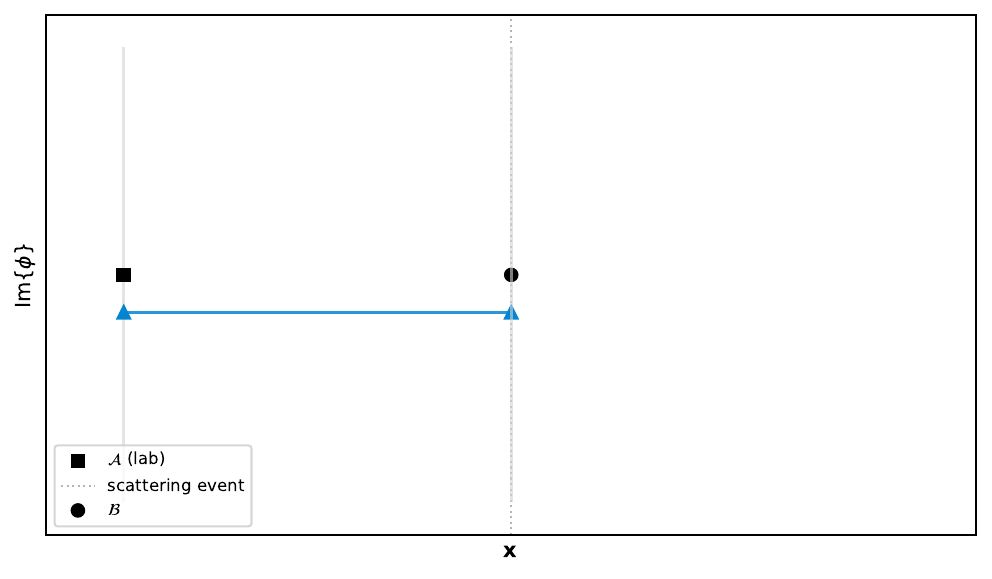}
    \caption{$t = t_0 - \varepsilon$}
    \label{fig:scatter_tm}
  \end{subfigure}
  \begin{subfigure}[b]{0.32\textwidth}
    \includegraphics[width=\textwidth]{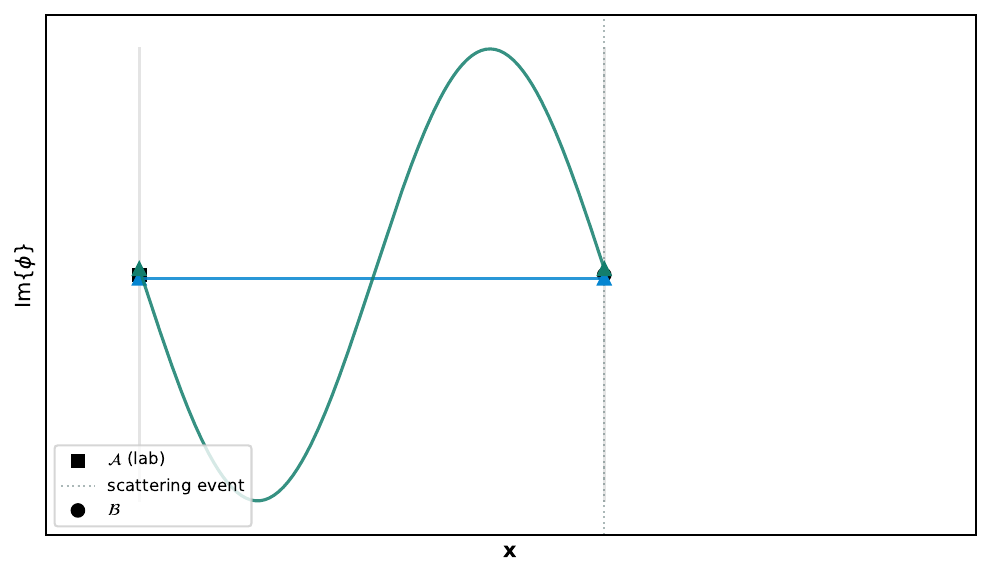}
    \caption{$t = t_0$}
    \label{fig:scatter_t0}
  \end{subfigure}
  \begin{subfigure}[b]{0.32\textwidth}
    \includegraphics[width=\textwidth]{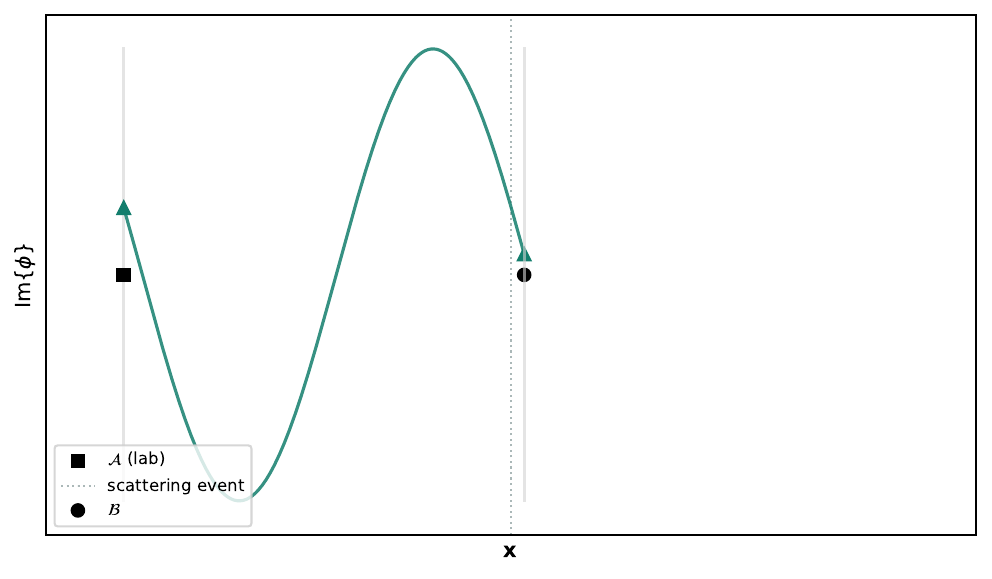}
    \caption{$t = t_0 + \varepsilon$}
    \label{fig:scatter_tp}
  \end{subfigure}
  \caption{Snapshots of $\Im\{\phi(x)\}$ in $e_{a}$ during the scattering of $\mathscr{B}$ at $(t_0,\vec{x}_1)$. Phase matching at $t_0$ prevents proper-time discontinuities for both observers.}
  \label{fig:phase_scatter}
\end{figure}
Thus the governing system \eqref{eq:ge1}--\eqref{eq:ge2} is formally equivalent to the Klein-Gordon equation \eqref{eq:KG} for a single orbit wave function, as previously established in \cite{MS1964}. In the current framework the phase $\phi$ has an additional function as a velocity potential and so has a direct geometric significance relating integral curves in one set of tetrad adapted coordinates to another tetrad coordinate system.
Earlier works have connected Hamilton--Jacobi theory and de Broglie waves \cite{deB1925, Dir1933, Syn1960, McK1978, But2004}, but restricted their focus to particle trajectories within a single, fixed reference frame or background observer. In contrast, the present work extends $\phi(x)$ from a single-frame trajectory guide to a global velocity potential encoding the coordinate transformations between mutually accelerating tetrad frames. This proposal was first developed in \cite{Lyo2023} in the context 1+1 dimensional spacetime. The imaginary parameter $K=i\kappa$ enforces single-valued phase matching across scattering events, and demonstrates how de Broglie phase waves resolve apparent proper-time discontinuities between accelerated observers without violating relativistic locality.

\subsection{Acceleration via potential fields}\label{sec:potentials}
A charged particle ($m, q$) moving in a background potential $A_{\mu}(x)$ follows an integral curve $x(\tau,\vb{y})$. The action governing this trajectory (and thus the tetrad mapping) is
\begin{equation}\label{eq:action}
I[x(\tau,\vb{y})] = \int_{\tau_{1}}^{\tau_{2}} \left[m\sqrt{\dot{x}_{\mu}\dot{x}^{\mu}} + q A_{\mu}(x)\dot{x}^{\mu}\right]\dd{\tau}, \qquad \dot{x}^\mu \equiv \frac{\dd x^\mu}{\dd \tau},
\end{equation}
see \cite{LL2013} for example. 
Evaluating this yields the Hamilton--Jacobi and continuity equations
\begin{align}
(\partial^{a}S-qA^{a})(\partial_{a}S-qA_{a}) &= m^2, \label{eq:HJ}\\
\partial_{a}(\partial^{a}S-qA^{a}) &= 0, \label{eq:contin}
\end{align}
where kinetic momentum $p^{a} = m\dot{x}^{a} = \partial^{a}S - q A^{a} \in \mathcal{H}$ parameterises an invertible tetrad map.

Along an integral curve, introducing the phase shift $\theta(x) = \frac{q}{m} \int_{\tau_{1}}^{\tau} A_{a}\dot{x}^{a}\dd{\tau'}$ keeps action \eqref{eq:action} and field system \eqref{eq:HJ}--\eqref{eq:contin} covariant with $\dot{\theta} = A_a \dot{x}^a$. Transverse derivatives across the congruence yield
\begin{equation}
\pdv{\theta}{y^{i}} = \frac{q}{m}\left[A_{\mu}J^{\mu}_{\ i}\right]_{\tau_1}^{\tau} + \frac{q}{m}\int_{\tau_{1}}^{\tau} F_{\rho\mu}\dot{x}^{\mu}J^{\rho}_{\ i}\,\dd{\tau'},
\end{equation}
where $J^\mu_{\ i} = \partial x^\mu / \partial y^i$ and $F_{\mu\nu} = \partial_\mu A_\nu - \partial_\nu A_\mu$. The non-vanishing field strength $F_{\rho\mu}$ makes $\theta(x)$ intrinsically path-dependent \cite{JO2001}, preventing the global absorption of $A_a$ into $S$ via a gauge shift unless $F_{\mu\nu}=0$.

\section{Boosts of tetrad spin structures}
Section \ref{sec:tetrads} has established that the requirement to generate an orthogonal tetrad from every four velocity $e_{0}\in\mathcal{H}$ forces the simultaneous $\mathfrak{so}(1,3)$, $\ca{3,0}$ and $\mathfrak{su}(2)$ structures of the operators $\Omega_{i}$. The tetrad generators consequently fulfil three interrelated functions at once: they generate the spatial legs $e_{i}=\Omega_{i}e_{0}$ of each tetrad, they generate the boosts $n_{0}=\Lambda e_{0}$ between tetrads, and they endow every tetrad with an intrinsic spin structure which those same boosts carry between tetrads. This section develops the transformation properties of these spin structures and their consequences for relativistic spin measurements.

\subsection{Tetrad transformations}\label{sec:Pauli-Lorentz-reps}
Since $\Omega_{i}\in\mathfrak{so}(1,3)$ by Lemma \ref{lem:metric-anti-symmtry}, the transformation between the lab frame $e_{a}$ and the rest frame $n_{a}$ is generated by the tetrad generators themselves. Expanding the particle four velocity $n_{0}$ in the lab frame,
\begin{equation}\label{eq:jac-kin-rep}
n_{0} = u^{0}e_{0} + u^{i}e_{i} = \left(\cosh(b)\mathbf{1}+ \sinh(b)\opr{U}\right)e_{0}=\mathrm{e}^{b \opr{U}} e_{0},
\end{equation}
cf. \cite{ZR1992}, where $\opr{U}=\hat{u}^{i}\Omega_{i}$ satisfies $\opr{U}^{2}=\mathbf{1}$ for any $\mathbf{\hat{u}}\in S^{2}$, and $u^{a}$ are the four velocity components of the particle with respect to the lab frame.

The Clifford algebra generators $\Omega_{i}\in\ca{3,0}$ provide a particularly direct route from the Lie algebra $\mathfrak{so}(1,3)$ and the Lie group $\mathrm{SO}(1,3)$ since Zeni and Rodrigues~\cite{ZR1992} establish 
\begin{equation}\label{eq:Lorentz-gen}
\Lambda(b) = e^{b \opr{U}} = \cosh(b)\mathbf{1}+\sinh(b)\opr{U} .
\end{equation}
The metric anti-symmetry condition established in Lemma \ref{lem:metric-anti-symmtry} ensures $\eta\opr{U}^{\top}\eta=-\opr{U}$, while the Clifford product \eqref{eq:CP-Pauli} ensures
\begin{equation}
\eta\Lambda^{\top}(b)\eta=\cosh(b)\mathbf{1}-\sinh(b)\opr{U}=\Lambda(-b).
\end{equation}
While equation \eqref{eq:Lorentz-gen} is well-established in many group theoretic treatments of the Lorentz group (see \cite{Wei1995, Car2000} for example), a particular advantage of the Clifford algebra treatment studied by Zeni and Rodrigues in \cite{ZR1992} is that this is an exact result.

Since $\Omega_{i}\in\mathfrak{so}(1,3)$ and $\Omega_{i}\in\mathfrak{su}(2)$,
the vectors $u=\mathrm{e}^{b\opr{U}}e_{0}$ and $v=\mathrm{e}^{\beta\hat{V}}e_{0}$ ($\hat{V}=\hat{v}^{i}\Omega_{i}$)
belong to the representation spaces of both $\mathrm{SO}(1,3)$ and
$\mathrm{SU}(2)$, and so are endowed with two distinct inner products
\[\langle u , v \rangle = u^{\top}\eta v  \qquad \ip{u}{v}=u^{\dag}v.\]
In tetrad adapted coordinates where $e^{\mu}_{a}=\delta^{\mu}_{a}$ these inner products take the form
\[\langle e_{a} , e_{b}\rangle = \eta_{ab}\qquad \ip{e_{a}}{e_{b}} = \delta_{ab}.\]
The boosted tetrad $n_i=\Omega_{i}n_{0}  = \Omega_{i}\Lambda(b)e_{0}$ remains orthogonal with respect to the Minkowski inner product with $\langle n_{a} , n_b \rangle = \eta_{ab}$, where
\begin{equation}
n_{i}=\Omega_{i}(\cosh(b)+\sinh(b)\hat{u}^{j}\Omega_{j})e_{0} =(\hat{u}^{j}\sinh(b)\delta_{ij} + \cosh(b)\Omega_{i}  + \iu\sinh(b)\varepsilon_{ijk}\hat{u}^{j}\Omega_{k})e_0.
\end{equation}
We also note that $\Lambda(b)e_{i}=n_{i}^{*}$ where $n_{i}$ and $n_{i}^{*}$ are complex conjugates. Moreover, the pseudoscalar defined by $(\Omega_1\Omega_{2}\Omega_{3})^2=-\mathbf{1}$ (cf. Corollary \ref{cor:su2}) has a complementary solution $\Omega_{i}^{*}$ related to $\Omega_{i}$ by complex conjugation. Moreover, in the canonical representation these commute:
\[\comm{\Omega_{i}}{\Omega_{j}^{*}}=0\quad i,j=1,2,3\]
so that $\Omega_{i}$ and $\Omega_{j}^{*}$ generate two independent chiral sectors (see \cite{Wei1995} for example). As such, $n_{i}^{*}=\Omega_{i}^{*}n_{0}=\Lambda(b)e_{i}$ while $n_{i}=\Omega_{i}\Lambda(b)e_{0}$.  

\begin{figure}[h!]
\centering
\includegraphics[width=0.5\textwidth,trim=0cm 0cm 0cm 0cm,clip=true]{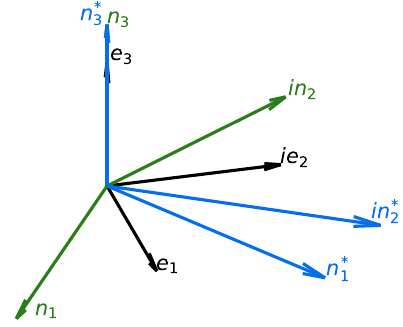}
\caption{The spatial legs of the tetrads $e_{a}$, $n_{a}$ and $n_{a}^{*}$ in the rest
frame coordinates of $e_{a}$, where the tetrads are related by the boost
$\Lambda(0.75)=\mathrm{e}^{0.75\,\Omega_{3}}$. Note that $n_{0}=n_{0}^{*}$ and $n_{3}=n_{3}^{*}$. }\label{fig:eini}
\end{figure}
Figure \ref{fig:eini} illustrates the boosted spatial legs in $e_{a}$-adapted
coordinates, where the components of $n_{1}$ and $\iu n_{2}$ are expressed in
terms of $e_{1}$ and $\iu e_{2}$. While these components are complex, the
tetrads themselves are not complex vectors: each individually represents a
real classical reference frame, and coordinates may always be adapted so that
$n_{a}^{\mu}=\delta^{\mu}_{a}$, or so that $e_{a}^{\mu}=\delta^{\mu}_{a}$,
though not both. The complex quantities reside in the transformations
relating tetrads, whose generators are endowed with the $\ca{3,0}$ Clifford product and
$\mathfrak{su}(2)$ Lie algebra.

\subsection{Tetrad spin structures}\label{sec:spin-ops}
The Clifford product \eqref{eq:CP-Pauli} ensures $\Omega_{i}^{2}=1$ so that the spectrum of each tetrad generator is $\spec(\Omega_{i})= \pm1$ with each eigenvalue being doubly degenerate. The  eigenvectors of $\Omega_{3}$ follow from the Clifford product \eqref{eq:CP-Pauli} and the Lie algebra \eqref{eq:Lie-su2},
which ensure the operators $\K_{\pm}=\mathbf{1}\pm\Omega_{3}$ and $\Omega_{\pm}=\Omega_{1}\pm \iu \Omega_{2}$ satisfy 
\begin{equation}\label{eq:proj-ids}
\Omega_{3}\K_{\pm}=\pm \K_{\pm} \qquad \Omega_{3}\Omega_{\pm} =\pm\Omega_{\pm},
\end{equation}so that the operators $\K_{\pm}$ and $\Omega_{\pm}$  project any timelike direction $e_{0}\in\mathcal{H}$ onto $\Omega_{3}$ eigenvectors. The eigenvectors of $\Omega_{3}$ issuing from the timelike direction $e_{0}$ are arranged in doublets of the form
\begin{equation}\label{eq:null-tetrad-e}
\begin{cases}
\kappa_{1}=\frac{1}{\sqrt{2}}(e_{0}+ e_{3})\\
\kappa_{2}=\frac{1}{\sqrt{2}}\left(e_{1} - \iu e_{2}\right)
\end{cases}
\qquad
\begin{cases}
\kappa_{3}=\frac{1}{\sqrt{2}}\left(e_{1} + \iu e_{2}\right)\\
\kappa_{4}=\frac{1}{\sqrt{2}}(e_{0} - e_{3}).
\end{cases}
\end{equation}
Geometrically $\K_{\pm}$ generate operational null rays propagating along the $e_{3}$-direction while $\Omega_{\pm}$ generate null rays $e_{1}\pm \iu e_{2}$ which span the plane transverse to these operational null rays.    
In coordinates adapted to the tetrad $e_{a}$, where $e_{a}^{\mu}=\delta^{\mu}_{a}$, the tetrad is also orthonormal with respect to the Hermitian inner-product with $\ip{e_{a}}{e_{b}}=e_{a}^{\dag}e_{b}=\delta_{ab}$.  This in turn extends to the null ray tetrad with $\ip{\kappa_{a}}{\kappa_{b}} = \delta_{ab}$.

The doublets $\{\kappa_1,\kappa_2\}$ and $\{\kappa_3,\kappa_4\}$ are spin doublets, with the ladder operators $\Omega_\pm=\Omega_{1} \pm \iu \Omega_{2}$ exchanging members within each doublet. As previously noted, the spatial legs $e_i=\Omega_i e_0$ may be related to a null ray structure $k_i$ according to $e_i=k_i-e_0,$ so that the tetrad $e_a$ may be defined in terms of an observer four velocity $e_0$ and null rays $k_i$. The tetrad $e_a$ may be also recovered from the null ray tetrad $\kappa_a$ by a unitary transformation $e_a=U^{\dag}\kappa_a$ where an explicit representation of $U$ is given in \ref{apx:reps}. As such the null ray tetrad provides an alternative representation of the tetrad $e_a$
\begin{equation}\label{eq:kappa2e}
e_{0} = \frac{1}{\sqrt{2}}(\kappa_1+\kappa_4) \quad e_1=\frac{1}{\sqrt{2}}(\kappa_2 + \kappa_3)\quad e_2=\frac{i}{\sqrt{2}}(\kappa_2-\kappa_3)\quad e_3=\frac{1}{\sqrt{2}}(\kappa_1-\kappa_4)
\end{equation}
and is analogous to the Newman--Penrose tetrad \cite{PR1984}. A four vector $u=u^{a}e_{a}$ can be written in the $\kappa_{a}$ basis as
$u=U^{a}\kappa_{a}$ where the components $U^{a}$ are the individual entries of the matrix
\[u^{a}\sigma_{a}=\frac{1}{\sqrt{2}}\begin{pmatrix}u^0+u^3 & u^1-\iu u^2 \\ u^1 + \iu u^2 & u^0 - u^3 \end{pmatrix}.\] 
Letting $\{\zeta^{r}\}_{r=1}^{2}$ be eigenvectors of $\sigma^{3}$, then the four vector
$u=U^{a}\kappa_{a}$ may be written as $u=\zeta^{r\dag}(u^{a}\sigma_{a})\zeta^{s}k_{rs}$ where
$k_{11}=\kappa_{1}$, $k_{21}=\kappa_{2}$, $k_{12}=\kappa_{3}$ and $k_{22}=\kappa_{4}$. Thus a
four vector may be identified with a spinor bilinear. Moreover, when $u\in\mathcal{H}$ then
$\det(u^{a}\sigma_{a})=\tfrac{1}{2}$ and
$(u^{a}\sigma_{a})^{-1}=2\left(u^{0}\sigma_{0}-u^{i}\sigma_{i}\right)$.

\subsection{Boosts of spin structures}\label{sec:spin-boosts}

The frame dependence of relativistic spin measurements is the central subtlety of relativistic quantum information: spin and momentum cannot be cleanly decoupled under Lorentz boosts, and the reduced spin density matrix is not Lorentz covariant \cite{Cza1997, PST2002, PT2004}. In the tetrad formalism this frame dependence separates into two sharply distinct facts, each following from a short Clifford algebra computation: the spectra of the spin operators are Lorentz invariant, while their expectation values are not, and the mechanism of the latter is the failure of the Hermitian inner product, not of the spin operators themselves,  to be invariant under boosts. For composed boosts in non-colinear directions the same computation yields the Wigner rotation directly.

The projection identities \eqref{eq:proj-ids} make no reference to the timelike direction on which the tetrad generators act, so any $n_{0}\in\mathcal{H}$ has an associated null ray tetrad
\begin{equation}\label{eq:null-tetrad-n}
\lambda_{1} = \frac{1}{\sqrt{2}}(n_0+n_3) \quad  \lambda_{2}=\frac{1}{\sqrt{2}}(n_1-\iu n_2)
\qquad
\lambda_{3}=\frac{1}{\sqrt{2}}(n_1 + \iu n_2) \quad \lambda_{4} = \frac{1}{\sqrt{2}}(n_0-n_{3}),
\end{equation}
cf. equation \eqref{eq:null-tetrad-e}. The null ray tetrad $\lambda_{a}$ is orthonormal with respect to the Hermitian inner product in coordinates adapted to its associated tetrad, where $n_{a}^{\mu}=\delta^{\mu}_{a}$ and $\ip{\lambda_{a}}{\lambda_{b}}=\delta_{ab}$. 
The relation between the null ray tetrads $\lambda_{a}$ and $\kappa_{a}$
follows entirely from the Clifford product \eqref{eq:CP-Pauli} and Lie
algebra \eqref{eq:Lie-su2}, which ensure
\begin{equation}\label{eq:proj-boost}
\begin{aligned}
\K_+\Lambda(b) &= \mathfrak{a}_{+}\K_+ + \bar{\mathfrak{b}}\,\Omega_{+}
&\qquad
\Omega_{-}\Lambda(b) &= \mathfrak{a}_{+}\Omega_{-} + \bar{\mathfrak{b}}\K_-\\
\Omega_{+}\Lambda(b) &= \mathfrak{a}_{-}\Omega_{+} + \mathfrak{b}\K_+
&\qquad
\K_-\Lambda(b) &= \mathfrak{a}_{-}\K_- + \mathfrak{b}\,\Omega_{-}
\end{aligned}
\end{equation}
where
\begin{equation}
 \mathfrak{a}_{\pm}=\cosh(b) \pm \hat{u}^{3}\sinh(b) \qquad \mathfrak{b}=\left(\hat{u}^{1} + \iu \hat{u}^{2}\right)\sinh(b).
\end{equation}
Applying the identities \eqref{eq:proj-boost} to $e_{0}/\sqrt{2}$ yields the
transformation between the null ray tetrads directly:
\begin{equation}\label{eq:block}
\begin{pmatrix}\lambda_{1} \\ \lambda_{2} \\ \lambda_{3} \\ \lambda_{4}\end{pmatrix} = 
\begin{pmatrix}\mathfrak{a}_{+} & 0 & \bar{\mathfrak{b}} & 0 \\ 0 & \mathfrak{a}_{+} & 0 & \bar{\mathfrak{b}} \\ \mathfrak{b} & 0 & \mathfrak{a}_{-} & 0 \\ 0 & \mathfrak{b} & 0 & \mathfrak{a}_{-}\end{pmatrix}\begin{pmatrix}\kappa_{1} \\ \kappa_{2} \\ \kappa_{3} \\ \kappa_{4}\end{pmatrix},
\end{equation}
implemented by a Hermitian but non-unitary matrix
$\mathfrak{M}=\mathfrak{M}^{\dag}$, so that in $e_{a}$ adapted coordinates
the Hermitian norms take the form
\begin{equation}\label{eq:ip-transform}
\begin{cases}
\lambda_{a}^{\dag}\lambda_{a} = \mathfrak{a}_{+}^{2} + \abs{\mathfrak{b}}^{2},\quad a=1,2\\
\lambda_{a}^{\dag}\lambda_{a} = \mathfrak{a}_{-}^{2} + \abs{\mathfrak{b}}^{2},\quad a=3,4.
\end{cases}
\end{equation}

These features have precedents in both the classical and the quantum literature. The action of a boost on null directions, with the Doppler factors appearing as eigenvalues on the light cone and the transverse mixing as aberration, underlies the elementary derivation of the Wigner rotation given in \cite{FT1999}, while the frame dependence of spin expectation values and the coherence induced between spin states are established features of relativistic quantum information \cite{Cza1997, PST2002, PT2004, GA2002, FBHH2010}. What the tetrad formalism contributes is the mechanism: both follow from the single block structure \eqref{eq:block}, in which the Doppler factors $\mathfrak{a}_{\pm}$ are the Hermitian norms of the null rays and the aberration $\mathfrak{b}$ is the doublet mixing.

Noting that $\hat{u}^{3}\in[-1,1]$ and $\mathbf{\hat{u}\cdot\hat{u}}=1$, then $\hat{u}^{3}=\tanh(\beta)$ and $\varphi=\arg\{\hat{u}^{1}+\iu\hat{u}^{2}\}$ yields the transformation
\[\lambda_{1}=\frac{\cosh(b+\beta)}{\cosh(\beta)}\kappa_{1} + \frac{\sinh(b)}{\cosh(\beta)}\mathrm{e}^{-\iu\varphi}\kappa_{3} \qquad \lambda_{3}=\frac{\sinh(b)}{\cosh(\beta)}\mathrm{e}^{\iu\varphi}\kappa_{1} + \frac{\cosh(b-\beta)}{\cosh(\beta)}\kappa_{3}. \]
Figure~\ref{fig:doublet_boost} illustrates the null rays $\{\kappa_{1}$, $\kappa_{3}\}$, and their boosted counterparts $\{\lambda_{1}$, $\lambda_{3}\}$ with reference to $e_{a}$.  The asymmetric stretching of $\kappa_{1}$ and $\kappa_{3}$  and the conjugate admixture phases $\mathrm{e}^{\mp\iu\varphi}$ are directly visible, the latter as the mutually orthogonal imaginary parts $\Im\,\lambda_{1}$ and $\Im\,\lambda_{3}$, while the boost leaves the doublet entirely within the $+1$-eigenspace spanned by $\kappa_{1}$ and $\kappa_{3}$.
\begin{figure}[ht!]
\centering
\includegraphics[width=0.65\textwidth]{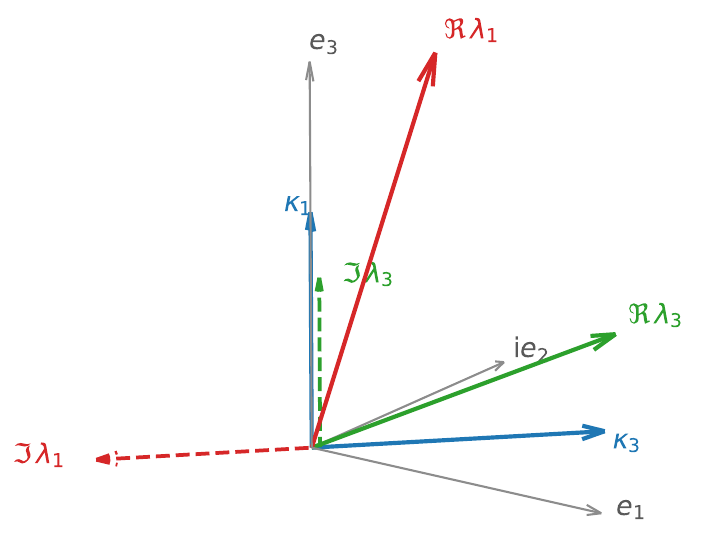}
\caption{The $+1$-eigenspace doublet of $\Omega_{3}$ under the boost
$\Lambda(b)$ in $e_{a}$-adapted coordinates, in the basis
$(e_{1},\iu e_{2}, e_{3})$, with $b=0.8$,
$\hat{u}^{3}=\tanh(\beta)=\cos(70^{\circ})$ and $\varphi=60^{\circ}$. The
conjugate phases of the admixtures appear as the imaginary parts
$\Im\,\lambda_{1}$, transverse, and $\Im\,\lambda_{3}$, longitudinal
($\Im\,\lambda_{3}$ is drawn marginally off-axis for visibility). The
$-1$-eigenspace doublet $\{\kappa_{2}, \kappa_{4}\}$ transforms
identically.}\label{fig:doublet_boost}
\end{figure}

\subsection{Wigner rotations}\label{sec:Wigner}
The null ray tetrads $\kappa_{a}$ and $\lambda_{a}$ are related by $\K_{\pm}$ and $\Omega_{\pm}$ applied to timelike directions $e_{0}$ and $n_{0}=\Lambda e_{0}$. Notably, the tetrads are not related by a naive Lorentz boost of the form $\lambda_{a}=\Lambda\kappa_{a}$ since $\Omega_{3}\kappa_{a}=\epsilon\kappa_{a}$ ($\epsilon=\pm 1$) does not imply $\Omega_{3}\Lambda\kappa_a=\epsilon\Lambda\kappa_{a}$. This is because $\comm{\Omega_{3}}{\Lambda}\neq0$ unless the boost takes place in $(e_0,e_{3})$-plane only.

The tetrad generators $\Omega_{i}$ used here have chiral counterparts $\Omega_{i}^{*}$ arising as the complementary solution of equation \eqref{eq:pseudo} whose canonical representation is the complex conjugate of $\Omega_i$ from \ref{apx:reps}. These yield the pure boost generators $K_{i}$ and pure rotation generators $J_{i}$ according to
\[K_{i}=\frac{1}{2}(\Omega_{i}+\Omega_{i}^{*})\quad J_{i}=\frac{1}{2\iu}(\Omega_{i}-\Omega_{i}^{*}),\]
see \cite{Car2000} for example. 
This now means that conjugation of $\Omega_{i}$ by a pure boost with rapidity $b$ and direction $\mathbf{\hat{u}}\in S^{2}$ is implemented by
\begin{equation}
\Omega_{i}(b)=\mathrm{e}^{b\hat{u}^{j}(\Omega_{j}+\Omega_{j}^{*})/2} \Omega_{i}\mathrm{e}^{-b\hat{u}^{j}(\Omega_{j}+\Omega_{j}^{*})/2}= \mathrm{e}^{b\opr{U}/2} \Omega_{i} \mathrm{e}^{-b\opr{U}/2}. 
\end{equation} 
Likewise, a pure rotation is implemented by
\begin{equation}
\tilde{\Omega}_{i}(\theta)= \mathrm{e}^{-\iu \theta \opr{N}/2} \Omega_{i} \mathrm{e}^{\iu \theta \opr{N}/2},\quad \opr{N}=\hat{n}^{j}\Omega_{j}
\end{equation}  
see \cite{ZR1992} for further discussion.

We now consider a third tetrad $w_{a}$ moving in relation to the tetrads $e_{a}$ and $n_{a}$.  Under a pure boost 
\[B(a)w_0=e^{a(\opr{U} + \opr{U}^*)/2}w_0 =e_0\]
the spatial legs  $w_{i}$ are transported to $\tilde{w}_{i}^{e} $ according to 
\[B(a)\Omega_{i}w_0 = \Lambda(a/2)\Lambda(a/2)^*\Omega_{i}w_0 = \tilde{w}_{i}^{e}.\] 
Similarly under $n_{0}=\Lambda(b)w_0$ the spatial directions $w_{i}$ are transported to $\tilde{w}^{n}_{i}$ as
\[B(b)\Omega_{i}w_0 = \Lambda(b/2)\Lambda(b/2)^*\Omega_{i}w_0=\tilde{w}_{i}^{n} \qquad B(b)w_0=\mathrm{e}^{b(\opr{V} + \opr{V}^*)/2}w_0.\]
The relationship between $\tilde{w}_{i}^{e}$ and $\tilde{w}_{i}^{n}$ is now given by
\[\tilde{w}_{i}^{e} = \Lambda(a/2)\Lambda(-b/2)\Lambda(a/2)^{*}\Lambda(-b/2)^{*}\tilde{w}_{i}^{n}.\]
Expansion of  $\Lambda(a/2)\Lambda(-b/2)$ now gives
\begin{equation}\label{eq:composite}
\Lambda(a/2)\Lambda(-b/2) = \left[\cosh(a/2)\mathbf{1}+\sinh(a/2)\opr{U}\right]\left[\cosh(b/2)\mathbf{1}-\sinh(b/2)\opr{V}\right]
\end{equation}
and in particular the bi-vector part of the term
\[\opr{U}\opr{V}=\hat{u}^{i}\hat{v}^{j}\delta_{ij}\mathbf{1} + \iu\varepsilon_{ijk}\hat{u}^{i}\hat{v}^{j}\Omega_{k}\]
induces a rotation in the plane spanned by $\mathbf{\hat{v}}$ and $\mathbf{\hat{u}}$.

Let us now reconsider the results of Terashima and Ueda \cite{TU2003} from the perspective of tetrad generators, where the tetrad $e_{a}$ moves in the $w_{1}$-direction ($\hat{u}^{i}= \delta^{i}_{1}$) and $n_{a}$ moves in the $w_{3}$ direction ($\hat{v}^{i}= \delta^{i}_{3}$). The composite tetrad rotor $L=\Lambda(a/2)\Lambda(-b/2)$ is of the form
\begin{equation}
L=l_{0}\mathbf{1} + l_{1}\Omega_{1} - l_{3}\Omega_{3} + l_{31}\Omega_{3}\Omega_{1}.
\end{equation}
The bi-vector term $l_{31}\Omega_{3}\Omega_{1}$ indicates a rotation in the $w_{1}-w_{3}$ plane i.e.\ about the $w_{2}$ direction. This role of $\Omega_{3}\Omega_{1}$ as a rotation generator is particularly clear in the tetrad formalism since 
$$\Omega_{3}\Omega_{1}w_{1}=\Omega_{3}\Omega_{1}\Omega_{1}w_{0}=\Omega_{3}w_0=w_{3}$$ 
so that $\Omega_{1}$ first maps $w_{1}$ to $w_0$ then $\Omega_{3}$ maps $w_{0}$ to $w_{3}$. A similar process occurs when applied to $w_{3}=\Omega_{3}w_0$  upon using $\Omega_{3}\Omega_{1}=-\Omega_{1}\Omega_{3}$ to give $\Omega_{3}\Omega_{1}w_{3}=-w_{1}$. Meanwhile the direction $w_{2}$ is unchanged since $\Omega_{3}\Omega_{1}=\iu\Omega_{2}$ which commutes with $\Omega_{2}$. Thus $\Omega_{3}\Omega_{1}$ rotates $w_{1}$ towards $w_{3}$  about $w_{2}$.  

This now indicates $L$ is a composite rotor of a boost and rotation $L=BR$  of the form 
\begin{equation}
L=(b_{0}\mathbf{1} + b_{1}\Omega_{1} + b_{3}\Omega_{3})(r_{0}\mathbf{1}+ r_{31}\Omega_{3}\Omega_{1})  = l_{0}\mathbf{1} + l_{1}\Omega_{1} - l_{3}\Omega_{3} + l_{31}\Omega_{3}\Omega_{1}.
\end{equation}
Since 
\[b_{0}\mathbf{1} + b_{1}\Omega_{1} + b_{3}\Omega_{3} = (l_{0}\mathbf{1} + l_{1}\Omega_{1} - l_{3}\Omega_{3} + l_{31}\Omega_{3}\Omega_{1})(r_{0}\mathbf{1} - r_{31}\Omega_{3}\Omega_{1}) \]
is a pure boost, rotations generated by $\Omega_{3}\Omega_{1}$ must be absent from the right hand expression, which now requires $r_{31}/r_{0}=l_{31}/l_{0}$. It is clear from equation \eqref{eq:composite} that $l_{0}=\cosh(a/2)\cosh(b/2)$ and $l_{31}=\sinh(a/2)\sinh(b/2)$ while the coefficients of $R$ are of the form $r_0=\cos(\delta/2)$ and $r_{31}=\sin(\delta/2)$. It now follows that
\begin{equation}
\tan(\delta)=\frac{\sinh(a)\sinh(b)}{\cosh(a)+\cosh(b)}, 
\end{equation}
so $\delta$ is the standard Wigner rotation angle (cf. reference \cite{TU2003}). The key advantage of the tetrad approach is that the decomposition $L = BR$ is read off directly from the Clifford algebra structure of the composite rotor, without requiring the explicit matrix calculations used in \cite{FT1999, TU2003,  Hal1968}.

\subsection{Relativistic spin correlations}\label{sec:singlet}
The isotropic singlet $\Psi$ is defined by the condition $(\Omega_{i}\otimes 1 + 1\otimes \Omega_{i})\Psi=0$ and may be written with respect to the $\Omega_{3}$ eigenbasis as:
\begin{equation}
\Psi=\frac{1}{\sqrt{2}}\kappa_{1}\otimes\kappa_{2} - \frac{1}{\sqrt{2}}\kappa_{2}\otimes\kappa_{1}.\
\end{equation}  
from which it is found
\begin{equation}\label{eq:isotropic_spin}
(\Omega_{i}\otimes\mathbf{1} + \mathbf{1}\otimes\Omega_{i})\Psi = 0,\quad \forall \ i\in \{1,2,3\},
\end{equation}
see \cite{Bel2004} for example. 
It is clear that
\[\Omega_{i}\otimes\mathbf{1}\Psi =-\mathbf{1}\otimes \Omega_{i}\Psi\]
so that upon applying $1\otimes \Omega_{j}$ the Lie bracket \eqref{eq:Lie-su2} now ensures 
\[\Omega_{i}\otimes \Omega_{j}\Psi = -\mathbf{1}\otimes \left(\delta_{ji}+2\iu \varepsilon_{jik}\Omega_{k}\right)\Psi.\]
This now ensures the correlation tensor of the isotropic singlet state is 
\begin{equation}\label{eq:correlation}
\ev{\Omega_{i}\otimes \Omega_{j}}{\Psi}=-\delta_{ij}
\end{equation} 
as expected. Let us now consider the meaning of the isotropic spin state $\Psi$: It is recalled that 
\[\kappa_{1}=\frac{1}{\sqrt{2}}\K_{+}e_{0}\qquad \kappa_{2}=\frac{1}{\sqrt{2}}\Omega_{-}e_0.\]
If $e_{a}$ and $n_{a}$ are two co-moving tetrads with $e_{0}=n_{0}$ and associated null ray tetrads $\kappa_{a}=\lambda_{a}$, then $\Psi$ is the entangled state of these null ray tetrads, so that in tetrad formalism we have 
\[\Psi = \frac{1}{2}\K_{+}e_0\otimes\Omega_{-}n_0 - \frac{1}{2}\Omega_{-}e_0\otimes\K_{+}n_0,\qquad e_0=n_0.\]

We now consider the case where $e_{a}$ and $n_{a}$ are in uniform motion so that $n_{0}=\Lambda(b)e_{0}$. The essential point of the tetrad formalism is that $\Omega_{i}$ are universal and have already been used to develop $\K_{\pm}$ and $\Omega_{\pm}$ which generate null ray tetrads $\lambda_{a}$ and $\kappa_{a}$ attached to each moving frame. The $\Omega_{3}$ eigenvalues are the same for corresponding members of $\kappa_{a}$ and $\lambda_{a}$, irrespective of their relative velocity. However the Hermitian norms of each tetrad are frame dependent with $\lambda_{1}$ and $\lambda_{2}$ and their respective Hermitian norms given in equations \eqref{eq:null-tetrad-n} and \eqref{eq:ip-transform}. Denoting their normalised counterparts by
$\hat{\lambda}_{a}=\lambda_{a}/\sqrt{\ip{\lambda_{a}}{\lambda_{a}}}$, the projection identities \eqref{eq:proj-ids} ensure
\[\Omega_{3}\hat{\lambda}_{1}=\hat{\lambda}_{1} \qquad \Omega_{3}\hat{\lambda}_{2}=-\hat{\lambda}_{2} \qquad \Omega_{1}\hat{\lambda}_{1}=\hat{\lambda}_{2},\]
analogous to the relations satisfied by $\kappa_{1}$ and $\kappa_{2}$.  The new antisymmetric state
\begin{equation}
\Psi = \frac{1}{\sqrt{2}}\left(\kappa_{1}\otimes\hat{\lambda}_{2} - \kappa_{2}\otimes\hat{\lambda}_{1}\right)
\end{equation}
also satisfies $(\Omega_{i}\otimes 1 + 1\otimes\Omega_{i})\Psi=0$ for $i=1,2,3$, cf. equation \eqref{eq:isotropic_spin}. It is also straightforward to verify the single particle expectation values
$\ev{1\otimes\Omega_{i}}{\Psi}$ vanish, while the correlation tensor is
\begin{equation}\label{eq:corr-boosted}
\left<\Psi\vert\Omega_{i}\otimes \Omega_{j}\vert\Psi\right>=-\delta_{ij},
\end{equation}
cf. equation \eqref{eq:correlation}. Thus the correlation tensor for the spin structures obtained by application of the universal tetrad generators $\Omega_{i}$  to general timelike directions $e_{0}$ and $n_{0}$  is independent of the boost $\Lambda(b)$ relating them.

This is the two-particle counterpart of the invariance established in \S\ref{sec:spin-boosts}. The block form \eqref{eq:block} preserves the $\Omega_{3}$ eigenvalue of each null ray while altering only its Hermitian norm, and normalisation removes the norm, so the correlations are carried entirely by the eigenvalue structure of the tetrad generators, which is universal. Treatments based on rest frame spin operators averaged over momentum wave packets instead find the spin correlations to be degraded under boosts, cf. \cite{Cza1997, PST2002, PT2004, GA2002, FBHH2010}.  The difference lies in which operator is taken to represent relativistic spin, cf. \cite{TA2021}.

\section{Boost invertibility and the Dirac equation}\label{sec:dirac}

\subsection{Correspondence of boost invertibility and the mass shell condition}\label{sec:mass-shell}
The result of Theorem \ref{thm:Clifford-product} means that the tetrad generators $\Omega_{i}\in\cl_{3,0}(\mathbb{R})$ belong to the Pauli algebra. This algebra is isomorphic to the even part $\cl_{1,3}^{+}(\mathbb{R})$  of the spacetime algebra (see \cite{ZR1992} for example). The spacetime algebra $\cl_{1,3}(\mathbb{R})$ is generated by $\{\gamma_{a}\}_{a=0}^{3}$ which are endowed with the Clifford product
\begin{equation}
\gamma_{a}\gamma_{b} + \gamma_{b}\gamma_{a} = 2\eta_{ab},  \label{eq:CP-Dirac}
\end{equation}
see \cite{Lou2001, DL2003}. The isomorphism $\cl_{3,0}(\mathbb{R})\cong \cl_{1,3}^{+}(\mathbb{R})$ is realised by the canonical embedding
\begin{equation}
\Omega_{i} \cong \alpha_i = \gamma_i\gamma_0,
\end{equation} 
see \cite{Hes2015}.  The matrices $\Omega_{i}$ and $\alpha_{i}$ are not equal, rather this isomorphism ensures any algebraic relation among $\Omega_{i}$ has a counterpart realised by $\alpha_i$. In particular, the Clifford product \eqref{eq:CP-Dirac} ensures
\begin{equation}
\alpha_{i}\alpha_{j}+\alpha_{j}\alpha_{i}=2\delta_{ij} \label{eq:pp-sta}
\end{equation}
cf. equation \eqref{eq:CP-Pauli} so that $\alpha_{i}$ in the even spacetime algebra $\cl_{1,3}^{+}(\mathbb{R})$ also furnish a representation of the Pauli algebra $\cl_{3,0}(\mathbb{R})$.

A tetrad map $\Lambda=u^{0}+u^{i}\Omega_{i}$ with $u\in\mathcal{H}$,  has inverse $\Lambda^{-1}=u^{0}-u^{i}\Omega_{i}$ which follows from $\Lambda^{\top}\eta\Lambda=\eta$ and $\eta\Omega_{i}^{\top}\eta=-\Omega_{i}$ (see Lemma \ref{lem:metric-anti-symmtry}). Hence, for any $u\in\mathcal{H}$ it is found that
\begin{equation}
(u^{0}\mathbf{1} + u^{i}\Omega_{i}) (u^0\mathbf{1} - u^{j}\Omega_{j}) = \mathbf{1}, \label{eq:invert-pa}
\end{equation}
in the Pauli algebra. 
The Clifford product \eqref{eq:pp-sta} now means that  the invertibility condition \eqref{eq:invert-pa} in the Pauli algebra, \emph{must}  have a corresponding  form in the spacetime algebra, given by 
\begin{equation}
(u^{0}\mathbf{1} + u^{i}\alpha_{i}) (u^{0} \mathbf{1}- u^{j}\alpha_{j}) = \mathbf{1}.\label{eq:invert-sta}
\end{equation}
The Clifford product \eqref{eq:CP-Dirac} ensures $\gamma_{0}^{2}=\mathbf{1}$ and the involution $\gamma_{0}\gamma_{i}\gamma_{0}=-\gamma_{i}$, while using $\alpha_{i}=\gamma_{i}\gamma_{0}$,  the invertibility condition \eqref{eq:invert-sta} now becomes
\begin{equation}\label{eq:invert-Dirac}
(u^{a}\gamma_{a})^2=\mathbf{1},
\end{equation} 
where $u^{a}\gamma_{a}$ is a Dirac operator. 

A particle of mass $m$  and four momentum $p$ has four velocity $u=p/m\in\mathcal{H}$. The condition \eqref{eq:invert-pa} means that  the tetrad map parameterised by $p/m$ is invertible, and its counterpart in the spacetime algebra is $(p^{a}\gamma_{a})^2=m^2\mathbf{1}$. This is the factorisation of the mass shell condition for which the matrices $\gamma_{a}$ were introduced in \cite{Dir1928}. In the tetrad framework, the Dirac operator $p^{a}\gamma_{a}$ is not an independent postulate introduced to achieve this factorisation. Rather it is the reformulation of tetrad map invertibility when expressed in the spacetime algebra.

\subsection{Dirac spinors}\label{sec:spinors}
The Dirac factorisation $(p^{a}\gamma_{a})^{2}=m^2$  ensures the Dirac operator $p^{a}\gamma_{a}$ has eigenvalues $\pm m$, each with multiplicity two. The corresponding eigenvectors are defined by
\begin{equation}\label{eq:spinor-eqn}
p^{a}\gamma_{a}\omega_{s}^{\pm}(p) = \pm m\omega_{s}^{\pm}(p),
\end{equation} 
where $s$ is an appropriate two-valued index.
In particular, when $p^{0}=m$ and $p^{i}=0$, then the eigenvalue problem becomes
\begin{equation}
\gamma_{0}\xi_{s}^{\pm} = \pm \xi_{s}^{\pm}\quad s=1,2,
\end{equation} 
where $\xi_{s}^{\pm}$ will be called rest spinors. We have found that the map $\phi=\mathrm{e}^{-i\bl{p}{x}/\hbar}$ which generalises the classical action $S=m\tau$, provides the four-momentum $p$ according to $p^{a}=\bar{\phi}\iu\hbar\partial^{a}\phi$. It now follows that equation \eqref{eq:spinor-eqn} is equivalent to 
\begin{equation}\label{eq:Dirac-free}
\iu\hbar\gamma_{a}\partial^{a}\psi = \pm m\psi\qquad \psi=\phi\omega_{s}^{\pm}(p) 
\end{equation}
the free Dirac equation first derived in \cite{Dir1928}. 
 
In the standard treatment, the rest spinors $\xi_{s}^{\pm}$ are defined as simultaneous eigenvectors of $\gamma_{0}$ and $\Sigma_{3}=\diag(\sigma_{3},\sigma_{3})$ (see \cite{Dir2007, Mer1998} for example), while $\Sigma_{3}$ is equivalent to $\Omega_{3}$ under a similarity transformation, see \ref{apx:reps}. In the tetrad formalism, the rest spinors may be analogously defined as simultaneous eigenvectors of $\gamma_{0}$ and $\Omega_{3}$. Moreover, since the null ray tetrad $\kappa_{a}$ is a complete eigenbasis of $\Omega_{3}$, the rest spinors $\xi_{s}^{\pm}$ may be formally identified with this tetrad, where it is also found that 
\[
\gamma_{0}\kappa_{1}=\kappa_{1}, \qquad \gamma_{0}\kappa_{2}=\kappa_{2} \qquad 
\gamma_{0}\kappa_{3}=-\kappa_{3}, \qquad \gamma_{0}\kappa_{4}=-\kappa_{4}.
\]
The four null rays therefore yield the four rest spinors 
$\xi_{s}^{\pm}$ directly: the positive-$\gamma_{0}$ doublet is 
$(\xi_{+}^{+}, \xi_{-}^{+}) = (\kappa_{1}, \kappa_{2})$, and the 
negative-$\gamma_{0}$ doublet is $(\xi_{+}^{-}, \xi_{-}^{-}) 
= (\kappa_{3}, \kappa_{4})$. 

The spinor boost $\mathsf{S}(b)$ satisfies the standard intertwining 
property $u^{\mu}\gamma_{\mu}\mathsf{S}(b) = \mathsf{S}(b)\gamma_{0}$, 
see \cite{Wei1995} for example. It follows that
\begin{equation}
p^{a}\gamma_{a}\omega_{s}^{\pm} = m\, u^{\mu}\gamma_{\mu}\mathsf{S}(b)\xi_{s}^{\pm} 
= m\,\mathsf{S}(b)\gamma_{0}\xi_{s}^{\pm} 
= \pm m\,\mathsf{S}(b)\xi_{s}^{\pm} = \pm m\,\omega_{s}^{\pm},
\end{equation}
so the boosted  spinor $\omega_{s}^{\pm}(b)=\mathsf{S}(b)\xi_{s}^{\pm}$ 
is an eigenspinor of the Dirac operator $p^{a}\gamma_{a}$ with eigenvalue $\pm m$. The boosted spinors $\omega_{s}^{\pm}(b)$ are attained via 
$$\mathsf{S}(b)=\cosh\left(\frac{b}{2}\right)\mathbf{1}+\sinh\left(\frac{b}{2}\right)\hat{u}^{i}\alpha_{i}\qquad \alpha_{i}=\gamma_{i}\gamma_{0}.$$
The null rays are spacetime vectors $\kappa_{a}$, while $\xi_{s}^{\pm}$ are rest spinors, and both are represented by the same set of column vectors. Geometrically these are different objects, since $\kappa_{a}$ transform under $\Lambda(b)$ in the Pauli algebra, while $\xi_{s}^{\pm}$ transform under $\mathsf{S}(b)$ in the spacetime algebra.

\subsection{Dirac currents}\label{sec:currents}
Given a Dirac spinor $\psi$, the associated Dirac current is
\begin{equation}
j_{a}(x) = \bar{\psi}\gamma_{a}\psi\quad \bar{\psi}=\psi^{\dag}\gamma_0.
\end{equation} 
A momentum eigenmode $\psi(x;p)=\mathrm{e}^{-\iu\bl{p}{x}/\hbar}\omega_{s}^{+}(p)$ has an associated Dirac current 
\[j_{a}(x) = \bar{\psi}(x;p)\gamma_{a}\psi(x;p)=\frac{p_{a}}{m}.\]
Since $p_{a}p^{a}=m^2$, it follows $j_{a}(x)j^{a}(x)=1$ for all $x.$ This now means that the tetrad map
\begin{equation}
\Lambda(x) = j^{a}(x)\Omega_{a} \quad \Omega_{0}=\mathbf{1}
\end{equation}
satisfies $\Lambda(x)\eta\Lambda(x)^{\top}\eta = 1$ for all $x$, so that $\Lambda(x)$ is an invertible tetrad transformation at all $x$.

The Dirac equation \eqref{eq:Dirac-free} being linear has a more general solution space comprising general superpositions of momentum eigenmodes. Let us consider the simplest example of such a solution:
\begin{equation}
\psi(x) = \frac{1}{\sqrt{2}}\left[u(b)\mathrm{e}^{-\iu\bl{p}{x}/\hbar} + u(-b)\mathrm{e}^{-\iu\bl{q}{x}/\hbar}\right]\qquad p=m\mathrm{e}^{b\Omega_{1}}e_0\quad q=m\mathrm{e}^{-b\Omega_{1}}e_0.
\end{equation} 
$u(\pm b)=\mathrm{e}^{\pm b\alpha_{1}}\xi_{1}^{+}$. 
The associated Dirac current with components $j_{a}(x)=\bar{\psi}\gamma_{a}\psi$ is given by
\begin{equation}
j(x) = \left[\cosh(b) + \cos(\theta(x))\right]e_0 + \sinh(b)\sin(\theta(x))e_{2}\quad \theta(x)=\frac{2p^1x^1}{\hbar}. 
\end{equation}
Using the Gordon decomposition (see \cite{Mer1998} for example) this current has convective and spin currents given by
\begin{equation}
\begin{aligned}
j_{\mathrm{conv}}(x) &= \cosh(b)\left[1+ \cosh(b)\cos(\theta(x))\right]e_{0}\\
j_{\mathrm{spin}}(x) &= \sinh(b)\left[-\sinh(b)\cos(\theta(x))e_{0} +\sin(\theta(x))e_{2}\right] 
\end{aligned}
\end{equation}
Thus the Dirac current has a convective contribution in the $e_{0}$ direction and a spin current contribution in the $e_{2}$ direction, while the momentum  eigenmodes of $\psi$ are in the $e_{1}$-direction.  

This indicates that these currents do not directly correspond to the trajectory of a particle associated with  $\psi$. Notice that the phases $\mathrm{e}^{-\iu\bl{p}{x}/\hbar} $ and $\mathrm{e}^{-\iu\bl{q}{x}/\hbar} $  serve as velocity potentials whose gradients correspond  to the four velocities of congruences of integral curves. The Dirac current  $j(x)$ is neither tangent nor orthogonal to these integral curves and instead  propagates between these level sets and world lines. If  the Dirac current were to correspond to the actual trajectory of a particle, then this would require $j(x)$ to parameterise an invertible tetrad map $\Lambda(j(x)) = j^{a}(x)\Omega_{a}$. This invertibility then requires
\[\Lambda(j(x))\eta\Lambda(j(x))^{\top}\eta =\mathbf{1} \iff  \bl{ j(x)}{j(x)}=1,\]
for all $x$, which is clearly not the case.  

\subsection{Vector potential interactions}\label{sec:min-coupling}
The preceding subsections have taken the classical trajectory as given and recovered the Dirac operator from the invertibility of the tetrad map it parameterises. In the free case a momentum eigenspinor $\psi(x;p)=\mathrm{e}^{-\iu\bl{p}{x}/\hbar}\omega^{+}_{s}(p)$ has Dirac current $j^{a}=p^{a}/m\in\mathcal{H}$, cf. \S\ref{sec:currents}, and the same $p^{a}$ is returned by the phase gradient, $\iu\hbar\partial^{a}\psi=p^{a}\psi$. The bilinear and  phase gradient agree, and the corresponding tetrad map $j^{a}\Omega_{a}$ is invertible.
However, this only applies  to a momentum eigenmode. A
superposition of the form 
\begin{equation}
\psi=\sum_{r}a_{r}\,\mathrm{e}^{-\iu\bl{p_{r}}{x}/\hbar}\omega^{+}_{s_{r}}(p_{r})
\end{equation}
is not associated with a single phase, since the eigenmodes are parameterised by distinct momenta $p_{r}$. Moreover the spinors $\omega^{+}_{s_{r}}$ are generally distinct. As such there is no classical $S$ from which $\psi$ may be reconstructed. 

Consider now a spinor field in the presence of a potential $A_{a}(x)$, and restrict attention to a single orbit branch of the form $\psi=\mathrm{e}^{-\iu S/\hbar}\omega^{\pm}_{s}(x)$,  cf. \cite{MS1964}. If the associated tetrad map $j^{a}\Omega_{a}$ is to remain invertible
the current must lie on the unit hyperboloid, so that 
\begin{equation}\label{eq:current-kinetic}
mj^{a}=m\bar{\psi}\gamma^{a}\psi = \partial^{a}S-qA^{a},
\end{equation}
the kinetic momentum of \S\ref{sec:potentials}. Along a single orbit it is in principle possible to introduce a path dependent $\chi$ such that $\partial_{a}\chi=A_{a}$. However this only applies along a single integral curve and extending beyond it when $F_{\mu\nu}\neq0$ means that $\chi$ is no longer single valued.

The potential therefore cannot be absorbed in the phase of some $\psi$ which is also well-defined throughout spacetime. Instead the potential must be absorbed into an operator acting on $\psi$, so that the spinor field is governed by 
\begin{equation}\label{eq:Dirac-min}
\gamma_{a}D^{a}\psi = m\psi \qquad D^{a}=\iu\hbar\partial^{a}-qA^{a}.
\end{equation}
The Gordon decomposition of the associated current  now yields 
\begin{equation}\label{eq:gordon-min}
2m\bar{\psi}\gamma^{a}\psi = \left[\bar{\psi}D^{a}\psi +
\overline{(D^{a}\psi)}\psi\right] + \hbar\,\partial_{b}\left(\bar{\psi}\sigma^{ab}\psi\right),
\qquad \sigma^{ab}=\frac{\iu}{2}\comm{\gamma^{a}}{\gamma^{b}},
\end{equation}
cf. \S\ref{sec:currents}. Eikonal solutions of \eqref{eq:Dirac-min} are of the form $\psi=\mathrm{e}^{-\iu S/\hbar}\omega^{\pm}_{s}(x)$ with $\omega$ a general normalised spinor satisfying $\bar{\omega}\omega=1$. The associated Dirac current is given by the  Gordon decomposition 
\begin{equation}\label{eq:current-corrections}
m j^{a} = \left(\partial^{a}S-qA^{a}\right)
+ \iu\hbar\,\bar{\omega}\partial^{a}\omega
+ \frac{\hbar}{2}\partial_{b}\left(\bar{\omega}\sigma^{ab}\omega\right).
\end{equation}
The kinetic momentum of \S\ref{sec:potentials} is the leading term, and the corrections are of order $\hbar$ and arise from the variation of the spinor $\omega$: a connection term, and the divergence of the spin bivector $\bar{\omega}\sigma^{ab}\omega$ of \S\ref{sec:currents}. The identification of current and momentum  holds exactly when the spinor is covariantly constant along the congruence, and the classical tetrad map is recovered as the leading order contribution. Terms $\mathcal{O}(\hbar)$ are a  reflection of the spinor's own variation.

\subsection{Tetrad maps and spin in Coulomb potentials}\label{sec:coulomb}
A charge in the field of a nucleus of charge $Z$ has potential energy  $-\zeta/r$, with $\zeta = Zq^{2}/4\pi = Z\alpha_{\mathrm{el}}\hbar$, and the resulting relativistic Kepler problem is separable, with principal function 
\begin{equation}\label{eq:coulomb-HJ}
S(t,r,\varphi;E,\ell) = Et + \ell\varphi + \int^{r}\sqrt{(E^{2}-m^{2}) +
\frac{2E\zeta}{r'} - \frac{\ell^{2}-\zeta^{2}}{r'^{2}}}\;\dd{r'},
\end{equation}
see \cite{LL2013} for example. The associated four momentum $p^{a}=\partial^{a}S-qA^{a}$ in the plane of orbit has components
\[p^{0}=E+\frac{\zeta}{r}\qquad p_{\hat{r}}=\pm\sqrt{\left(E+\frac{\zeta}{r}\right)^{2}-m^2-\frac{\ell^2}{r^2}}\qquad p_{\hat{\varphi}}=\frac{\ell}{r}.\]
The single orbit spinor $\psi=\mathrm{e}^{-\iu S/\hbar}\omega(x)$ is constructed from $\kappa_1$ with 
\[\omega(x)=\frac{\left(m+p^{0}\right)\mathbf{1}+p^{i}\alpha_{i}}{\sqrt{2m\left(m+p^{0}\right)}}\,\kappa_{1},\]
where $\omega(x)$ satisfies $\gamma_{a}(\partial^{a}S -qA^{a})\omega(x)=m\omega(x)$ and $\bar{\omega}(x)\omega(x)=1$ pointwise. The eikonal  $\psi=\mathrm{e}^{-\iu S/\hbar}\omega(x)$ is a solution of the minimally coupled Dirac equation \eqref{eq:Dirac-min} only to terms $\mathcal{O}(\hbar)$.  

The associated Dirac current is $j^{a}(x)=\bar{\omega}(x)\gamma^a\omega(x) = p^{a}(x)/m$ identically and satisfies $\bl{j(x)}{j(x)}=1$ everywhere. Thus $j^{a}(x)\Omega_{a}\in\mathrm{SO}(1,3)$ is a classical invertible coordinate transformation. The map between the lab frame $e_{a}$ and the moving frame $n_{a}$ is parametrised by $j(x)$, and it is this current which varies along the orbital path as opposed to the tetrads $e_{a}$ or $n_{a}$ themselves.  In this sense the acceleration is induced by the tetrad maps between the frames rather than a feature of either frame separately, and neither is privileged during their motion.

Let us denote $\omega(x)=\mathsf{S}(x)\kappa_{1}$  and define the intrinsic spin operator by $\Omega_{3}(x)=\mathsf{S}(x)\Omega_{3}\mathsf{S}^{-1}(x)$. Then the isometry $\mathsf{S}^{\dag}(x)\gamma_{0}\mathsf{S}(x)=\gamma_{0}$ ensures
\begin{equation}
\bar{\omega}(x)\,\Omega_{3}(x)\,\omega(x) = \bar{\kappa}_{1}\Omega_{3}\kappa_{1}
= 1
\end{equation}
at every point of the orbit and for every rapidity. The lab frame spin operator $\Omega_3$ has expectation value
\begin{equation}
\bar{\omega}(x)\,\Omega_{3}\,\omega(x) = \frac{p^{0}(x)}{m}
= \frac{1}{m}\left(E+\frac{\zeta}{r}\right),
\end{equation}
when the boost direction is orthogonal to $e_3$. This oscillates between perihelion and aphelion and returns to its initial value after each radial period. 

\subsection*{Summary}
The identification of the Lorentz transformation between rest and observer frames with a spinor obeying an equation of Dirac form is established by G\"ursey \cite{Gur1957} and Baylis \cite{Bay1992}, where the Pauli algebra is postulated and the moving frame is of Frenet--Serret type. Here neither is assumed: the Clifford product is Theorem \ref{thm:Clifford-product}, while $e_{i}=\Omega_{i}e_{0}$ is generated from the four velocity alone and so is defined for every $e_{0}\in\mathcal{H}$, including the inertial case in which a Frenet frame degenerates. It is well established that spinors may be generated by projectors  $\tfrac{1}{2}(\mathbf{1}+\Omega_{3})\in \cl_{3,0}(\mathbb{R})$, see \cite{Lou2001} for example. Similar  projectors are also used to isolate the columns of a Lorentz transformation in \cite{Bay1992}. In each of these cases  $\Omega_{3}$ is a generator of a postulated algebra. In the current framework $\Omega_{3}$ is a tetrad generator, so that the projection generators $\K_{\pm}$ and the resulting null ray tetrad $\kappa_{a}$ are built from the same operators which produce the spatial legs $e_{i}=\Omega_{i}e_{0}$. The eikonal treatment of the Dirac equation is due to Pauli  and was completed by Rubinow and Keller \cite{RK1963}, who seek a solution
\begin{equation}\label{eq:RK}
\psi \sim \mathrm{e}^{\iu S/\hbar}\sum_{n=0}^{\infty}(-\iu\hbar)^{n}a_{n},
\end{equation}
satisfying \eqref{eq:Dirac-min}. This ensures $S$ satisfies the relativistic Hamilton--Jacobi equation and each $a_{n}$ satisfies an ordinary differential equation along the classical trajectories. The construction here is the leading term of \eqref{eq:RK}, but the direction of the argument is reversed: the Dirac equation is taken as given in \cite{RK1963} and the Hamilton--Jacobi equation recovered from it, whereas here the potential is carried by $D^{a}$ because the failure of $\chi$ to be single valued when $F_{\mu\nu}\neq0$ prevents it from being carried by the phase.

\section{Spin-momentum coupling}
The null ray tetrads $\kappa_{a}$ and $\lambda_{a}$ play dual roles:
\begin{enumerate}[\bfseries i)]
 \item They are eigenvectors of $\Omega_{3}$ transforming as geometric vectors under Lorentz boosts $\Lambda(b)$ generated by $\Omega_{i}$.
 \item They are eigenvectors of $\Omega_{3}$ and $\gamma_{0}$ serving as rest spinors $\xi_{s}^{\pm}$ transforming under Lorentz boosts $\mathsf{S}(b)$ generated by $\alpha_i = \gamma_i\gamma_0 \cong \Omega_i$.
\end{enumerate}
In Section~\ref{sec:spin-boosts}, the transformation between these tetrads was shown to be block-diagonal. This structure preserves the $\pm1$ eigenspaces of $\Omega_3$ under arbitrary boost compositions, preventing eigenspace mixing during Wigner rotations (Section~\ref{sec:Wigner}).

\subsection{Intrinsic spin operators}\label{sec:intrinsic}
Taillebois and Avelar \cite{TA2021} define an intrinsic physical property via an observer-independent mathematical object $A(v)$ tied to an observer's four-velocity $v\in\mathcal{H}$. This requirement automatically excludes standard spatial three-vectors, as their form is tied to a specific hyperplane.

In our framework, the universality condition \eqref{eq:ortho} ensures that the generators $\Omega_{i}\in\mathfrak{so}(1,3)$ generate an orthogonal frame from any $e_{0}\in\mathcal{H}$ and implement tetrad boosts. Consequently, the $\mathfrak{su}(2)$ structure (Corollary~\ref{cor:su2}) is intrinsic, with bivectors $\frac{\iu}{2}\comm{\gamma_i}{\gamma_j}=\varepsilon_{ijk}\Omega_{k}=\varepsilon_{ijk}(K_{k}+\iu J_{k})$ corresponding to the spin bivector $S^{ab}$ in \cite{TA2021}. Transforming $\gamma_{\mu}$ via $\gamma_{\mu}(b)=\mathsf{S}(b)\gamma_{\mu}\mathsf{S}(-b)$ \cite{ZR1992} yields the intrinsic operators:
\begin{equation}
\Omega_{i}(b)=\mathsf{S}(b)\Omega_{i}\mathsf{S}(-b).
\end{equation}

Quantum mechanically, an observable $A(v)$ must satisfy $\Tr[\rho\,A(v)]=\Tr[\rho'\,A(v+\delta v)]$ under observer frame changes \cite{TA2021}. While usually assumed unitary, this relation holds for non-unitary spinor boosts $\kappa_a\to\mathsf{S}(b)\kappa_a$ because $\mathsf{S}(-b)=\gamma_{0}\mathsf{S}^{\dagger}(b)\gamma_{0}$ \cite{Sch1961}, making the boost an isometry of the indefinite inner product $\bar{\kappa}_{a}\kappa_b=\kappa_{a}^{\dag}\gamma_0\kappa_b$. 

For a positive-energy wave packet $\psi = \int \frac{\dd[3]{p}}{2p^0}\sum_{s=1}^{2}a_{s}(b,\hat{u})e^{-\iu\langle p,x\rangle/\hbar}\mathsf{S}(b)\kappa_{s}$ normalised by $\Tr[\rho\gamma_{0}]=1$, evaluating the trace directly confirms compatibility:
\begin{equation}
\Tr[\rho \gamma_0\Omega_{i}(b)] = \frac{m^2}{2}\iiint\dd[2]{\hat{u}}\,\dd{b} \sum_{s=1}^{2}\sinh^2(b)\abs{a_{s}(b,\hat{u})}^2\,\bar{\kappa}_{s}\Omega_{i}\kappa_{s}.
\end{equation}
Thus, $\gamma_{0}\Omega_{i}(b)$ is a valid intrinsic spin operator under non-unitary transformations of $\rho$.

\subsection{Quantum reference frames and tetrads}\label{sec:qrf-comparison}
Giacomini, Castro-Ruiz, and Brukner \cite{GCB2019b} describe relativistic spin in Quantum Reference Frames (QRF) via a momentum-dependent superposition of Lorentz boosts. The tetrad framework offers a complementary geometric view based on the invertibility of boosts $\Lambda(b)\in\cl_{3,0}(\mathbb{R}) \cong \cl_{1,3}^{+}(\mathbb{R})$ (\S\ref{sec:mass-shell}).

In QRF \cite{GCB2019b}, a system $A$ with internal degree of freedom $\tilde{A}$ and lab frame $C$ are transformed via the unitary operator $\hat{S}_{L}=\mathcal{P}_{CA}^{(v)}U_{\tilde{A}}(\hat{\pi}_{C})$ \cite{GCB2019a}, giving $\ket{\Psi}_{A\tilde{A}}^{(C)} = \int \dd{\mu}_{A}(p_{A})\,\psi\left(-\frac{m_{C}}{m_{A}}p_{A}\right)\ket{p_{A};\Sigma_{p_{A}}}_{A\tilde{A}}$. In contrast, a tetrad momentum eigenstate $\psi(x;p)=e^{-\iu\langle p,x\rangle/\hbar}\mathsf{S}(b)\kappa_{s}$ yields a Dirac current $\bar{\omega}_{s}(p)\gamma^{\mu}\omega_{s}(p) = p^{\mu}/m = u^{\mu}$, which parameterises the classical, invertible boost $n_0=\bar{\omega}_{s}(p)\gamma^{a}\omega_{s}(p)\,\Omega_{a}e_0$. For a general wave packet $\psi(x) = \int \frac{\dd[3]{p}}{2p^0}\sum_{s}a(p,s)\,e^{-\iu\langle p,x\rangle/\hbar}\mathsf{S}(b)\kappa_{s}$, $j^a(x)j_a(x) \neq 1$, so $\psi(x)$ generates a superposition of classical tetrad coordinate maps rather than a single classical transformation.

The two constructions correspond directly. The QRF framework constructs laboratory observables $\hat{\Xi}_{i}=\hat{S}_{L}(\hat{\sigma}_{i}\otimes 1)\hat{S}_{L}^{\dagger}$ using a unitary operator constructed from the momentum operator $\hat{\pi}_C$, In the tetrad framework the operator $\Omega_{i}(b)=\mathsf{S}(b)\Omega_{i}\mathsf{S}(-b)$ uses the Clifford element $\mathsf{S}(b)$ with real parameter $b$. Using the unitary representation transformation $U$ (see \ref{apx:reps}), the positive-$\gamma_0$ rest-spinor boost corresponds directly to the QRF state:
\begin{equation}\label{eq:branch-match}
U^{\dagger}\omega_{s}^{+} = U^{\dagger}\mathsf{S}(b)\xi_{s}^{+} = \opr{U}(L_{p_{A}})\ket{k,\vec{\sigma}} = \ket{p_{A};\Sigma_{p_{A}}},
\end{equation}
with the negative-$\gamma_0$ doublet $(\kappa_3, \kappa_4)$ acting as the antiparticle sector projected out in \cite{GCB2019b}.

The key operational difference lies in where the indefiniteness arises. $\hat{\Xi}_{i}$ acts on $\mathscr{H}_{\tilde{A}}\otimes\mathscr{H}_{C}$ and entangles spin with reference-frame momentum via $\hat{S}_{L}$. $\Omega_{i}(b)$ acts strictly on the spinor space with real parameter $b\in\mathbb{R}$; momentum indefiniteness is carried entirely by the wave packet amplitudes acting on definite classical tetrad boosts $j^{a}\Omega_{a}$. Furthermore, while frame invariance in QRF requires unitarity of $\hat{S}_L$, tetrad invariance relies on the isometry $\mathsf{S}^{\dagger}(b)\gamma_{0}\mathsf{S}(b)=\gamma_{0}$.

This comparison stems from the dual geometric/spinorial role of the null-ray tetrad $\kappa_{a}$. In $\cl_{3,0}(\mathbb{R})$, $\kappa_{a}$ are geometric vectors transforming under $\Lambda(b)=\exp(\hat{u}^{i}\Omega_{i}b)$. Under the embedding $\cl_{3,0}(\mathbb{R})\cong\cl_{1,3}^{+}(\mathbb{R})$, they transform as standalone spinors under $\mathsf{S}(b)=\exp(\hat{u}^{i}\alpha_{i}b/2)$, reconstructing the spacetime tetrad bilinearly through the Dirac current $n_{0}^{\mu}=\bar{\omega}_{s}(p)\gamma^{\mu}\omega_{s}(p)$. The spin operators $\Omega_i(b)$ are thus the spinorial counterparts of the geometric tetrad generators.

\subsection{The hydrogenic ground state}\label{sec:hydrogenic}
Stationary states of hydrogenic atoms are governed by
\begin{equation}\label{eq:hydrogen}
\left(-\iu\hbar\alpha_{i}\partial^{i}-\frac{\zeta}{r} + \gamma_{0} m\right)\psi=E\psi,
\end{equation}
with $\zeta$ as in \S\ref{sec:coulomb}.
Radial spinors $a(r)\kappa_{a}$ satisfy $\alpha_{i}\partial^{i}a(r)\kappa_{a}=\hat{\rho}\,a'(r)\kappa_{a}$ with $\hat{\rho}=\hat{r}^{i}\alpha_{i}$, and since $\partial^{i}\hat{r}^{j}= r^{-1}(\delta^{ij}-\hat{r}^{i}\hat{r}^{j})$ one finds $(\alpha_{i}\partial^{i})^{2}a(r)\kappa_{a} = \left(a''(r)+2a'(r)/r\right)\kappa_{a}$, so $\alpha_{i}\partial^{i}$ leaves the space of radial spinors spanned by $\{\kappa_{a},\hat{\rho}\kappa_{a}\}$ invariant. Since $\gamma_{0}\alpha_{i}=-\alpha_{i}\gamma_{0}$, the pair
$\{\kappa_{a},\hat{\rho}\kappa_{a}\}$ belongs to different $\gamma_{0}$ doublets, so radial solutions in the positive $\Omega_{3}$ eigenspace take the form
\begin{equation}\label{eq:radial-ansatz}
\psi(r) = \left[a(r) + b(r)\hat{\rho}\right]\kappa_{1}.
\end{equation}
Since $\kappa_{1}^{\dag}\hat{\rho}\kappa_{1}=0$, the Dirac equation separates into two systems for $\kappa_{1}$ and $\hat{\rho}\kappa_{1}$, with coefficients
\begin{align}
-\iu\hbar\left(b'(r)+2\frac{b(r)}{r}\right) &= \left(E-m+\frac{\zeta}{r}\right)a(r), \label{eq:Dirac_kappa} \\
-\iu\hbar\, a'(r) &= \left(E + m + \frac{\zeta}{r}\right) b(r),\label{eq:Dirac_rhokappa}
\end{align}
respectively, cf.\ \cite{Dir2007}.

Normalisability of \eqref{eq:Dirac_kappa}--\eqref{eq:Dirac_rhokappa} gives $a(r)=A(r)F(r)$ and $b(r)=A(r)G(r)$, where
\begin{equation}\label{eq:bound-form}
A(r)=r^{\varsigma-1}\mathrm{e}^{-\chi r},\qquad
\varsigma=\sqrt{1-\frac{\zeta^{2}}{\hbar^{2}}}=\sqrt{1-Z^{2}\alpha_{\mathrm{el}}^{2}},\qquad
\hbar\chi=\sqrt{m^{2}-E^{2}},
\end{equation}
with $\chi\in\mathbb{R}$ for bound states and $F$, $G$ terminating series. The ground state corresponds to $G(r)/F(r)=\iu c_{0}$ with $c_{0}$ constant, so
\begin{equation}\label{eq:ground-state}
\psi(r) = A(r)\left[1 + \iu c_{0}\hat{\rho}\right]\kappa_{1},
\end{equation}
and substitution in \eqref{eq:Dirac_kappa}--\eqref{eq:Dirac_rhokappa} fixes
\begin{equation}\label{eq:c0}
c_{0}=\sqrt{\frac{1-\varsigma}{1+\varsigma}}=\frac{1-\varsigma}{Z\alpha_{\mathrm{el}}},
\qquad \text{so that}\qquad \frac{2c_{0}}{1+c_{0}^{2}}=Z\alpha_{\mathrm{el}}.
\end{equation}
The corresponding Dirac current $j^{a} = \bar\psi\gamma^{a}\psi$ has
\begin{equation}\label{eq:j0-hydrogen}
j_{0}(x) =A(r)^{2}\kappa_{1}^{\dag}[1- \iu c_{0}\hat{\rho}][1+\iu c_{0}\hat{\rho}]\kappa_{1} = A(r)^{2}\left(1 + c_{0}^{2}\right),
\end{equation}
since $\hat{\rho}^{2}=\mathbf{1}$, while the same calculation for $j^{i}$, using $\hat{\rho}\alpha_{i}\hat{\rho}=\hat{\rho}\acomm{\alpha_{i}}{\hat{\rho}}-\alpha_{i}$, $\comm{\alpha_{i}}{\alpha_{j}}=2\iu\varepsilon_{ijk}\Omega_{k}$ and $\kappa_{1}^{\dag}\Omega_{i}\kappa_{1}=0$ for $i=1,2$, gives $j^{1}=-2c_{0}A(r)^{2}\hat{r}^{2}$ and $j^{2}=2c_{0}A(r)^{2}\hat{r}^{1}$. The current $j=j^{a}\Omega_{a}e_{0}$ is therefore purely azimuthal,
\begin{equation}\label{eq:current}
  j(x) = A(r)^{2}\big(1+c_{0}^{2}\big)\,e_{0} + 2A(r)^{2} c_{0}\sin(\theta)\, \tilde{\Omega}_{2}(\varphi) e_{0},
  \qquad \tilde{\Omega}_{i}(\varphi) = \mathrm{e}^{-\iu\varphi\Omega_{3}/2}\, \Omega_{i}\,
    \mathrm{e}^{\iu\varphi\Omega_{3}/2}.
\end{equation}
Dividing by $j_{0}$ and using \eqref{eq:c0}, the local three velocity relative to the lab tetrad is $\mathbf{v}(x)= Z\alpha_{\mathrm{el}}\sin(\theta)\,\tilde{\Omega}_{2}(\varphi) e_{0}$. The radial profile cancels identically, so the speed depends on the polar angle alone and is everywhere the classical orbital velocity $Z\alpha_{\mathrm{el}}$ up to the factor $\sin(\theta)$, with $\abs{\mathbf{v}} < 1$ for $Z < 137$.

The hydrogenic ground state also discriminates between candidate relativistic spin operators. Bauke \emph{et al}. \cite{BAKG2014} require such an operator to commute with the free Dirac Hamiltonian, to have eigenvalues $\pm\tfrac{1}{2}$ and to satisfy the angular momentum algebra, and show that the ground state expectation values of the various candidates separate as $Z$ increases, so providing an experimental test. The tetrad generators meet the second and third conditions, while $\Omega_{3}$ is unitarily equivalent to $\Sigma_{3}$ (\ref{apx:reps}) and so behaves as the Pauli operator of \cite{BAKG2014} in this respect. The intrinsic operator $\gamma_{0}\Omega_{i}(b)$ of \S\ref{sec:intrinsic} falls outside that classification, being an isometry of the indefinite inner product rather than Hermitian with inner product, and its behaviour in the state \eqref{eq:ground-state} remains to be determined.

\section{Conclusion and outlook}\label{sec:conclusion}
This paper has shown that a single condition --- that a fixed system of linear operators $\Omega_{i}$ generates an orthogonal tetrad from every four velocity $e_{0}\in\mathcal{H}$ --- is sufficient to endow every tetrad with an intrinsic spin structure. Universality alone forces the $\mathfrak{so}(1,3)$, $\cl_{3,0}(\mathbb{R})$ and $\mathfrak{su}(2)$ structures of the generators, so that the Pauli algebra, the spin $\mathfrak{su}(2)$, and the imaginary unit as pseudoscalar arise as consequences of the tetrad geometry rather than as separate postulates. The same generators serve three roles at once: they build the spatial legs of a tetrad, they generate the boosts between tetrads, and they carry the intrinsic spin structure across them. Within this framework the Dirac operator and its spinors emerge from the invertibility of boosts between tetrads under the embedding $\cl_{3,0}(\mathbb{R})\cong\cl_{1,3}^{+}(\mathbb{R})$, and the Wigner rotation is read directly from the composite rotor without half-angle spinor manipulation.

Several physical consequences follow. The spin correlations of an entangled pair of null ray tetrads are independent of the boost relating them, the block structure of the transformation preserving the $\Omega_{3}$ eigenvalues while altering only the Hermitian norms. Minimal coupling is not imposed but forced: the potential is carried by $D^{a}$ precisely because the path dependence of the phase when $F_{\mu\nu}\neq0$ prevents it from being carried by the phase of a spinor field defined throughout spacetime. For a Coulomb potential the classical Hamilton--Jacobi congruence supplies a single orbit spinor whose Dirac current is exactly the kinetic momentum, so that the tetrad map it parameterises is invertible at every point of the orbit, while the hydrogenic ground state follows from the Clifford product alone and carries a purely azimuthal current whose local speed is the classical orbital velocity $Z\alpha_{\mathrm{el}}$ up to a factor $\sin(\theta)$.

Set against the quantum-reference-frame description of relativistic spin, the tetrad account offers a complementary view of the same physics. The indefinite relative motion between two frames is described in the QRF formalism by an operator-valued superposition of boosts on the Hilbert space of a quantum frame. It appears here as a superposition of definite classical coordinate transformations between tetrads, carried by the amplitudes of a Dirac spinor rather than by any indefiniteness in the frames themselves. The QRF and tetrad descriptions agree on each momentum eigenstate, and the geometric generators $\Omega_{i}$ underlie both the spin observable in the spacetime algebra and the frame transformation in the Pauli algebra.

Two limitations should be stated plainly: First, the tetrad map associated with a Dirac current is a classical one only for a single momentum in eikonal form. A general solution of the Dirac equation has an associated current such that $j^{a}j_{a}\neq1$, and then $j^{a}\Omega_{a}$ is a superposition of tetrad maps rather than a single invertible one. Second: the tetrad framework is equivalent to the Hilbert-space formalism wherever the two have been compared. The novel contribution is the geometric foundation and computational efficiency inherent in the tetrad framework. At present, the tetrad formalism provides no additional predictive content and the potential experimental consequences remain to be investigated more completely. It is however  worth noting that the tetrad framework does offer insight into spacetime dimension. Since tetrad generators are applied to spacetime vectors, they are necessarily $d\times d$ in $d$ dimensions. These tetrad generators are also bi-vectors in the spacetime algebra, generated by  Dirac matrices of dimension $2^{\lfloor d/2\rfloor}$. This requires $2^{\lfloor d/2\rfloor}=d$, an equation satisfied only for $d=1$, $d=2$ and $d=4$, however the case $d=1$ means $e_0$ has no orthogonal complement and so it is excluded.  Moreover, when $d=2$ there is only one operator such that $\Omega_{1}e_0=e_1$, and so no Clifford product or $\mathfrak{su}(2)$ Lie algebra properties are available.  

Several directions remain open. The extension to curved spacetime where the tetrad becomes a local frame along a worldline would test how far the tetrad framing  of spin persists. This raises a further question, since Rovelli \cite[\S 5.6.4]{Rov2004} asks whether the relational character of spacetime and that of quantum mechanics (see \cite{Rov1996} for example) share a common origin, and the construction here is suggestive in this direction. The tetrads are defined only in relation to one another, and the same map $j^{a}\Omega_{a}$ is a definite Lorentz transformation for a single momentum and a superposition of such transformations for a general spinor. The chirality distinguishing the two pseudoscalar representations $\Omega_{i}$ and $\Omega_{i}^{*}$, and its relation to the two Weyl sectors of the Dirac field, likewise merits closer study. Finally, the hydrogenic ground state provides a setting in which candidate relativistic spin operators may in principle be distinguished experimentally \cite{BAKG2014}, and the behaviour of the intrinsic operator $\gamma_{0}\Omega_{i}(b)$ in that setting remains to be determined.

\appendix
\section{Representations of Clifford algebras}\label{apx:reps}
This appendix derives an explicit matrix representation of the tetrad generators $\Omega_i$ satisfying the conditions of Lemma~\ref{lem:metric-anti-symmtry} and Theorem~\ref{thm:Clifford-product}, and establishes the unitary equivalence with the standard Pauli matrices $\sigma_i$ used in the main text.
Let us denote
\begin{equation}
\Omega_{i} = \begin{pmatrix} a_{i} & \vb{q}^{\mathrm{T}}_{i} \\ \vb{r}_{i} & \mathrm{D}_{i} \end{pmatrix}
\end{equation}
such that $\Omega_{i}^{\mathrm{T}}\eta+\eta\Omega_{i}=0$ and $\Omega_{i}^{2}=1$, then it is found that $a_{i}=0$ and $\vb{q}_{i} = \vb{r}_{i}\in S^{2}$. Upon choosing $\vb{r}_{1}=\vb{e}_{1}=(1,0,0)^{\mathrm{T}}$ and similarly choosing $\vb{r}_{2}$ and $\vb{r}_{3}$ to be the remaining cardinal directions $\vb{e}_{2}$ and $\vb{e}_{3}$, it is found that
\begin{equation}
\Omega_{1}=\begin{pmatrix} 0 & 1 & 0 & 0 \\ 1 & 0 & 0 & 0 \\ 0 & 0 & 0 & -\iu \\ 0 & 0 & \iu & 0 \end{pmatrix}
\quad
\Omega_{2}=\begin{pmatrix} 0 & 0 & 1 & 0 \\ 0 & 0 & 0 & \iu \\ 1 & 0 & 0 & 0 \\ 0 & -\iu & 0 & 0 \end{pmatrix}
\quad
\Omega_{3}=\begin{pmatrix} 0 & 0 & 0 & 1 \\ 0 & 0 & -\iu & 0 \\ 0 & \iu & 0 & 0 \\ 1 & 0 & 0 & 0 \end{pmatrix}
\end{equation}
Given the canonical time like direction $e_{0}=\begin{pmatrix} 1 & \vb{0} \end{pmatrix}^{\top}$, it is clear that the remaining orthogonal vectors in the tetrad generated by $e_{0}$ and $\Omega_{i}$ are of the form $e_{i}=\Omega_{i}e_{0}=\begin{pmatrix} 0 & \vb{e}_{i} \end{pmatrix}^{\top}$.

The involution $\Omega_{3}^{2}=1$ mean the eigenvalues of $\Omega_3$ are $\pm1$ with multiplicity two. The associated eigenvectors are
\begin{equation}
k_{1}=\frac{1}{\sqrt{2}}(e_{0}+e_{3})\quad k_2=\frac{1}{\sqrt{2}}(e_{1}-\iu e_{2})\quad k_3=\frac{1}{\sqrt{2}}(e_{1}+\iu e_{2})\quad k_{4}=\frac{1}{\sqrt{2}}(e_{0}-e_{3}).
\end{equation}
which are orthonormal under the Hermitian inner-product $k^{\dag}_{a}k_{b}=\delta_{ab}$ cf. \cite{PR1984}. Moreover, it is noted that these eigenvectors are null with respect to the Minkowski bi-linear form with $\langle k_{a},k_{a}\rangle=0$.  The unitary matrix
\begin{equation}
U=\frac{1}{\sqrt{2}}\begin{pmatrix} 1 & 0 & 0 & 1 \\ 0 & 1 & 1 & 0 \\ 0 & -\iu & \iu & 0 \\ 1 & 0 & 0 & -1 \end{pmatrix}
\end{equation}
induces the similarity transformation
\begin{equation}
U^{\dag}\Omega_{i} U= \Sigma_{i}= \diag(\sigma_{i},\sigma_{i})
\end{equation}
where $\sigma_{i}$ are the standard Pauli matrices.

Letting $\left\{\gamma^{\rm D}_{\mu}\right\}_{\mu=0}^{3}$ denote the Dirac representation then the representation $\gamma_{\mu}=U\gamma^{\rm D}_{\mu}U^{\dag}$ is of the form
\begin{equation}\label{eq:gg}
\resizebox{0.995\textwidth}{!}{
$\displaystyle
\gamma_{0} = \begin{pmatrix} 0 & 0 & 0 & 1\\0 & 0 & \iu & 0\\0 & - \iu & 0 & 0\\1 & 0 & 0 & 0 \end{pmatrix}\quad
\gamma_{1} = \begin{pmatrix} 0 & 0 & 0 & 1\\0 & 0 & \iu & 0\\0 & \iu & 0 & 0\\-1 & 0 & 0 & 0 \end{pmatrix}\quad
\gamma_{2} = \begin{pmatrix} \iu & 0 & 0 & 0\\0 & - \iu & 0 & 0\\0 & 0 & \iu & 0\\0 & 0 & 0 & - \iu \end{pmatrix}\quad
\gamma_{3} = \begin{pmatrix} 0 & -1 & 0 & 0\\1 & 0 & 0 & 0\\0 & 0 & 0 & \iu\\0 & 0 & \iu & 0 \end{pmatrix}
$%
}
\end{equation}
It is easily confirmed that
$$\gamma_{\mu}\gamma_{\nu} + \gamma_{\nu}\gamma_{\mu}=2\eta_{\mu\nu}\qquad \gamma_{i}\gamma_{j} - \gamma_{j}\gamma_{i} = -2\iu\varepsilon_{ijk}\Omega_{k}.$$

\section{Velocity potentials}\label{apx:vp}
While $\phi(x)$ is introduced to address the ambiguities introduced by non-uniform relative motion of observers and their moving frames, it is still expected that classical kinematics of uniform motion can be accommodated using the generalised velocity potential $\phi(x)$ also. In the case of uniform motion it is required that $\partial \tau(x)=u\in\mathcal{H}$ is a constant four-velocity so that
\begin{equation}
\partial_{\mu}\tau=\partial_{\mu}g(\phi) =g'(\phi)\partial_{\mu}\phi = u_{\mu} \qquad \partial_{\mu}\phi=\partial_{\mu}f(\tau)=\dot{f}(\tau)\partial_{\mu}\tau=\dot{f}(\tau)u_{\mu}
\end{equation}
Since $u_{\mu}u^{\mu}=1$, it now follows that 
\[g'(\phi)^{2}\partial_{\mu}\phi\partial^{\mu}\phi = 1 \]
so that $g'(\phi)\neq 0$ and $\partial_{\mu}\phi\partial^{\mu}\phi\neq0$ which in turn ensures $\dot{f}(\tau)\neq0$. As such, we find that
\begin{equation}
\dot{f}(\tau)g'(\phi)=\pm 1 \implies f(\cdot)=\pm g^{-1}(\cdot) 
\end{equation}
by the inverse function theorem. This ensures that solutions $\tau(x)$ of the governing equations \eqref{eq:ge1}--\eqref{eq:ge2} may be reformulated in terms of a generalised potential $\phi$, such that we may invert  between $\tau$ and $\phi$ using $f(\cdot)$ and $g(\cdot)$ which are mutually inverse. Physically, this means that each level-set $\tau(x)=\tau_{0}$ corresponds to a level-set $\phi(x)=\phi_{0}$ in such a way that $\partial \tau = g'(\phi)\partial \phi$.

The governing equations \eqref{eq:ge1}--\eqref{eq:ge2} are invariant under $\tau\to-\tau$, since $\partial_{\mu}\tau\partial^{\mu}\tau$ is quadratic and $\partial_{\mu}\partial^{\mu}\tau$ is linear and homogeneous in $\tau$. Since the level sets $\phi(x)=\phi_{0}$ and $\phi(x)^{l}=\phi_{0}^{l}$ describe the same set of spacetime points, they must correspond to the same proper time up to this sign freedom already present in the governing equations, rather than to a pair of independently-chosen inverses $g,g_{l}$. That is, we require a single function $g(\cdot)$ satisfying
\begin{equation}
\tau= g(\phi) = \epsilon\, g(\phi^{l}), \qquad \epsilon = \pm 1.
\end{equation}
Differentiating with respect to $x^{\mu}$ and using $\partial_{\mu}\phi^{l}=l\phi^{l-1}\partial_{\mu}\phi$, this gives
\begin{equation}
g'(\phi) = \epsilon\, l\, \phi^{l-1} g'(\phi^{l}),
\end{equation}
and since both $\phi$ and $\phi^{l}$ yield the same governing equations from \eqref{eq:ge1}, squaring this relation yields
\begin{equation}
g'(\phi)^{2}=l^{2}\phi^{2(l-1)}g'(\phi^{l})^{2},
\end{equation}
which is satisfied when $g'(\phi)=\phi^{r}/K$ for some constants $r$ and $K$. This now yields
\begin{equation}
\phi^{2r} = l^{2}\phi^{2(l-1)}\phi^{2rl} \implies
\begin{cases}
l=-1 \quad r=-1 \\
l=1 \quad r \text{ undetermined}.
\end{cases}
\end{equation}
The case $r=l=-1$ determines the explicit form of $g(\phi)$ since
\[g'(\phi)=\frac{1}{K\phi}\implies g(\phi)=\frac{1}{K}\ln(\phi)=\tau,\]
and substituting into $g(\phi)=\epsilon\,g(\phi^{-1})$ fixes $\epsilon=-1$, consistent with the $\tau\to-\tau$ symmetry noted above: $\phi^{-1}$ is simply the sign-reversed representation of the same congruence.

\section*{Funding}
This research did not receive any specific grant from funding agencies in the
public, commercial, or not-for-profit sectors.

\section*{Data availability statement}
No new data were created or analysed in this study.

\section*{Conflict of interest}
The author declares no conflict of interest.

\section*{Acknowledgments}
The author is grateful to the reviewers for several helpful comments and suggestions. 

\section*{References}
\providecommand{\newblock}{}

\end{document}